  \providecommand\BibTeX{{%
    \normalfont B\kern-0.5em{\scshape i\kern-0.25em b}\kern-0.8em\TeX}}}
\DeclareMathOperator*{\argmax}{argmax}
\begin{document}

%%
%% The "title" command has an optional parameter,
%% allowing the author to define a "short title" to be used in page headers.
\title[Hierarchical Conversational Preference Elicitation with Bandit Feedback]{Hierarchical Conversational Preference Elicitation with Bandit Feedback}

\author{Jinhang Zuo}
\affiliation{%
 \institution{Carnegie Mellon University}
 \city{Pittsburgh}
 \state{Pennsylvania}
 \country{USA}}
\email{jzuo@andrew.cmu.edu}

\author{Songwen Hu}
\affiliation{%
 \institution{Shanghai Jiao Tong University}
 \city{Shanghai}
 \country{China}}
\email{hsw828@sjtu.edu.cn}	

\author{Tong Yu}
\affiliation{%
 \institution{Adobe Research}
 \city{San Jose}
 \state{California}
 \country{USA}}
\email{tyu@adobe.com}
 
\author{Shuai Li}
\authornote{Corresponding author.}
\affiliation{%
 \institution{Shanghai Jiao Tong University}
 \city{Shanghai}
 \country{China}}
\email{shuaili8@sjtu.edu.cn}

\author{Handong Zhao}
\affiliation{%
 \institution{Adobe Research}
 \city{San Jose}
 \state{California}
 \country{USA}}
\email{hazhao@adobe.com}

\author{Carlee Joe-Wong}
\affiliation{%
 \institution{Carnegie Mellon University}
 \city{Pittsburgh}
 \state{Pennsylvania}
 \country{USA}}
\email{cjoewong@andrew.cmu.edu}

\begin{abstract}
The recent advances of conversational recommendations provide a promising way to efficiently elicit users' preferences via conversational interactions. To achieve this, the recommender system conducts conversations with users, asking their preferences for different items or item categories. Most existing conversational recommender systems for cold-start users utilize a multi-armed bandit framework to learn users' preference in an online manner. However, they rely on a pre-defined conversation frequency for asking about item categories instead of individual items, which may incur excessive conversational interactions that hurt user experience. To enable more flexible questioning about key-terms, we formulate a new conversational bandit problem that allows the recommender system to choose either a key-term or an item to recommend at each round and explicitly models the rewards of these actions. This motivates us to handle a new exploration-exploitation (EE) trade-off between key-term asking and item recommendation,  which requires us to accurately model the relationship between key-term and item rewards. We conduct a survey and analyze a real-world dataset to find that, unlike assumptions made in prior works, key-term rewards are mainly affected by rewards of representative items. We propose two bandit algorithms, Hier-UCB and Hier-LinUCB, that leverage this observed relationship and the hierarchical structure between key-terms and items to efficiently learn which items to recommend. We theoretically prove that our algorithm can reduce the regret bound's dependency on the total number of items from previous work. We validate our proposed algorithms and regret bound on both synthetic and real-world data.
\end{abstract}

%%
%% The code below is generated by the tool at http://dl.acm.org/ccs.cfm.
%% Please copy and paste the code instead of the example below.
%%
\begin{CCSXML}
<ccs2012>
   <concept>
       <concept_id>10002951.10003317.10003347.10003350</concept_id>
       <concept_desc>Information systems~Recommender systems</concept_desc>
       <concept_significance>500</concept_significance>
       </concept>
   <concept>
       <concept_id>10003752.10003809.10010047.10010048</concept_id>
       <concept_desc>Theory of computation~Online learning algorithms</concept_desc>
       <concept_significance>500</concept_significance>
       </concept>
 </ccs2012>
\end{CCSXML}

\ccsdesc[500]{Information systems~Recommender systems}
\ccsdesc[500]{Theory of computation~Online learning algorithms}

%%
%% Keywords. The author(s) should pick words that accurately describe
%% the work being presented. Separate the keywords with commas.
\keywords{conversational recommender system, online learning}

%% A "teaser" image appears between the author and affiliation
%% information and the body of the document, and typically spans the
%% page.
% \begin{teaserfigure}
%   \includegraphics[width=\textwidth]{sampleteaser}
%   \caption{Seattle Mariners at Spring Training, 2010.}
%   \Description{Enjoying the baseball game from the third-base
%   seats. Ichiro Suzuki preparing to bat.}
%   \label{fig:teaser}
% \end{teaserfigure}

%%
%% This command processes the author and affiliation and title
%% information and builds the first part of the formatted document.
\maketitle
\section{Introduction}
Recommender systems have attracted much attention in a variety of areas over the past decades, e.g., to steer users towards movies they will enjoy~\cite{deldjoo2018audio} or suggest applications that they may find useful~\cite{liu2015personalized}. Most recommender systems utilize large amounts of historical data to learn user behaviour and preferences.
However, for cold-start users who interact with the system for a short amount of time, there is not enough historical data to reliably learn user preferences, and thus the system has to interact with these users in an online manner in order to make informed recommendations.
Successfully learning user preferences from this interaction, however, is challenging, as most users will not tolerate extensive exploration of potential items \cite{zhang2020conversational,liu2018transferable}.
%Instead, recommender systems should only recommend items that are likely of interest to the user, which risks missing some items that the user may enjoy if they are not related to the first few items that are recommended.
The recent advances of conversational recommender systems (CRS) propose a new way to efficiently elicit users’ preferences~\cite{dalton2018vote,sardella2019approach,christakopoulou2016towards,zhang2020conversational,xie2021comparison,zhao2022knowledge}. The system is allowed to conduct conversations with users to get richer feedback information, which enables better preference elicitation and alleviates extensive explorations. These conversations are expected to be less burdensome for users than receiving irrelevant recommendations, allowing for less onerous exploration of item recommendations~\cite{christakopoulou2016towards}.

\textbf{Research challenges.} Although CRSs have achieved success in many fields (e.g., movie~\cite{dalton2018vote} and restaurant~\cite{sardella2019approach} recommendation), current conversation mechanisms still face challenges. Most mechanisms utilize a multi-armed bandit framework to decide what the agent should ask a user, based on the results of previous conversational interactions. Existing conversational bandits~\cite{christakopoulou2016towards,zhang2020conversational,xie2021comparison} rely on a pre-defined conversation frequency. Specifically, at some pre-determined conversation rounds, the recommender agent asks users questions about ``key-terms,'' i.e., their preferences for categories of items that could be recommended, and then receives feedback that it can use to infer users' item preferences. %\carlee{This information is then used to recommend items to users in subsequent interactions.}
%These key-term asking conversations are expected to be less annoying to users than random item recommendations, but we still want to minimize the number of conversational interactions. 
\begin{figure}[t]
    \centering
    \includegraphics[width=0.45\textwidth]{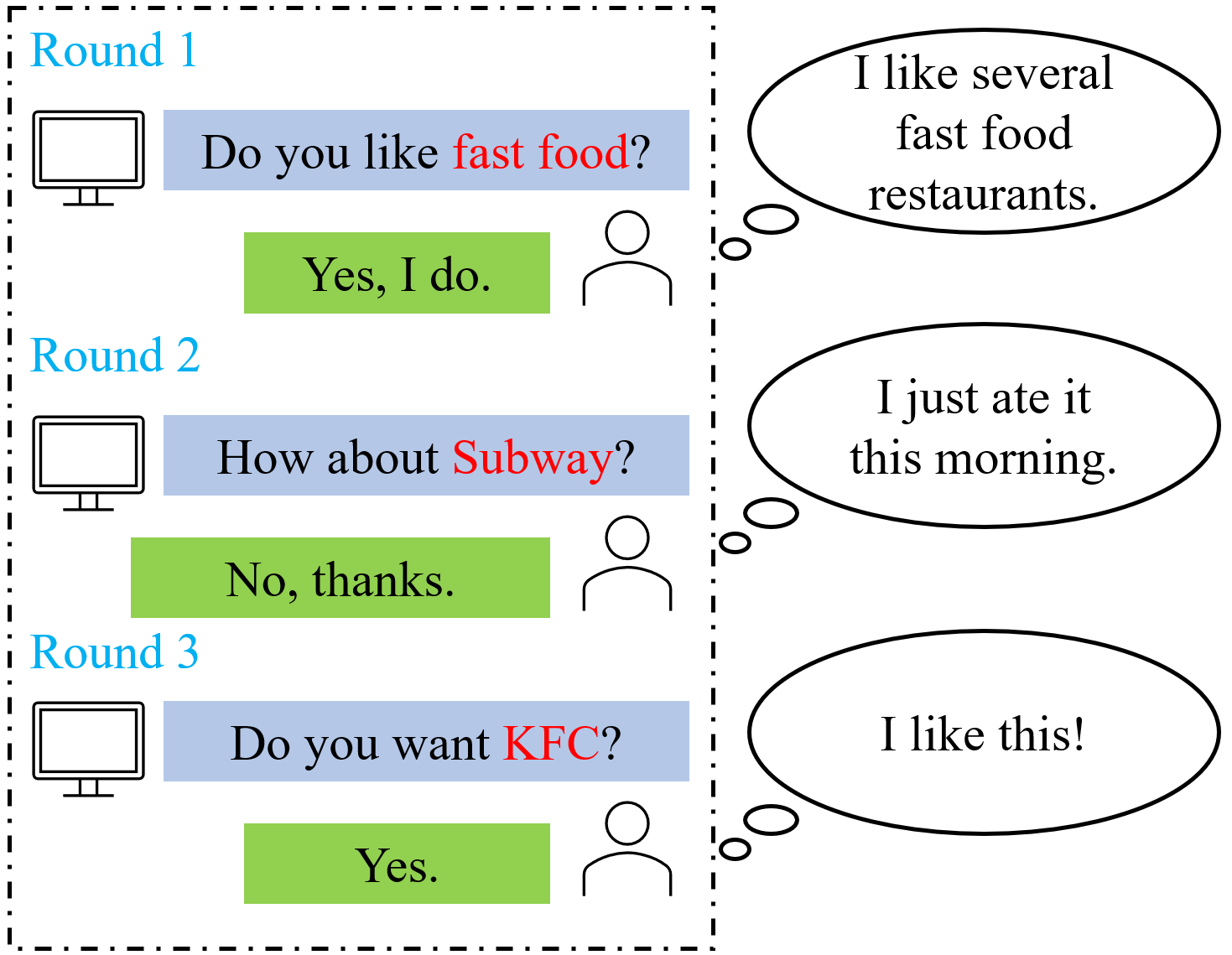}
    \caption{An example showing the importance of accurately modeling the relationship between the user preferences for key-terms and items. The user preference for `fast food' is affected by some fast food restaurants (\emph{i.e.}, `KFC' in this example), rather than all fast food restaurants. The user's rejection of `Subway' does not mean that the agent should always be penalized for recommending the key-term `fast food'. In fact, the user does like `fast food' in this case. However, after round 2, existing approaches \cite{zhang2020conversational,wu2021clustering,li2021seamlessly} will penalize the agent for recommending the key-term `fast food' in round 1.}
    \label{fig:RestaurantRec}
\end{figure}
However, even though conversational interactions may be more tolerable to users than poor item recommendations~\cite{christakopoulou2016towards}, we should still limit the number of required interactions for finding an optimal item~\cite{li2021seamlessly}.
Thus, for users whose preferences are easy to learn, we want to have fewer interactions, e.g., for users with clear preferences between key-term categories, more frequent key-term questions upfront may allow us to quickly identify a good key-term category containing items that the user likes. Prior work~\cite{li2021seamlessly} takes a step in this direction by unifying key-term questions and item recommendation in the same action space, but their solutions do not have theoretical guarantees and do not explicitly consider the cost of key-term questions, which can still incur burdensome conversational interactions. %Also, they only provide algorithms without theoretical guarantees.

A natural approach to enable more flexible key-term question interactions is to treat each such interaction as an alternative to item recommendation. These interactions can then be associated with a reward indicating how much the user likes this category of items, just as item recommendations are associated with rewards that represent how much a user likes the item. We can then model the recommender agent's actions as a sequence of actions, each drawn from an action space that includes both conversational key-term interactions and item recommendations. The goal is to choose a sequence of actions, informed by the results from prior actions, that maximizes the cumulative reward, i.e., user satisfaction. In doing so, we implicitly limit the number of required interactions before finding high-reward items (i.e., ones the user likes). To achieve this, \textbf{we formulate a new conversational bandit problem} that allows the agent to choose either a key-term or an item to recommend at each round and explicitly models the rewards of both actions.

This conversational bandit setting introduces a \emph{new exploration-exploitation (EE) trade-off between key-term questions and item recommendation}, which can not be handled by previous work. Balancing such an EE trade-off is challenging as it is affected by the complicated relationship between item rewards and key-term rewards.
One example of this relationship in conversational restaurant recommendations is illustrated by Figure~\ref{fig:RestaurantRec}.
The recommender agent first recommends key-term "fast food" to the user, which satisfies the user as he likes fast food restaurants such as KFC, Chipotle, and Wendy's. The agent then recommends "Subway", but the user rejects that recommendation since he just ate it shortly before and does not want to eat it again. The agent finally recommends another fast food restaurant "KFC", which is accepted by the user, and the interaction ends. In this example, the user's preference for "fast food" is mainly determined by his enjoyment of "KFC", but not his rejection of "Subway". %In other words, the reward of key-term "fast food" is directly related with the reward of the ``good'' item "KFC".
Intuitively, users' preferences for one key-term (i.e., the rewards associated with that key-term) are primarily affected by their preferences for the best (i.e., most preferred) items within that key-term category.
%mainly affected by their preferences of some representative items within that key-term, but not all items associated with that key-term.
Another example where the key-term reward is determined by the rewards of the best items can be found in image retrieval~\cite{jia2020personalized}: given a specific query from the user, we may show the user different clusters (key-terms) of images. If a cluster contains any images the user is looking for, he should be satisfied with that cluster. Thus, the user's satisfaction of a cluster is decided by whether it contains the desired images.

To verify our above intuition of the relationship between item and key-term rewards, we conduct a survey (as detailed in Section~\ref{sec:key-term}) asking participants to rate restaurants (items) and corresponding categories (key-terms).
We find that \emph{the item and key-term reward relationship defined by previous work is inaccurate for many categories}. Specifically, \cite{zhang2020conversational,xie2021comparison} consider the reward of a key-term to be the average of the rewards of all associated items. 
However, we find this average is usually smaller than the true key-term ratings from participants,
%We verify the intuition from the examples that 
%Instead, the true ratings are mainly determined by 
which are closer to the ratings of the most highly rated associated items. 
These findings are consistent with an analysis of restaurant ratings from the Yelp dataset (Section~\ref{sec:key-term}).

\textbf{Our contributions.} Inspired by the above observation that key-term rewards are mainly affected by the rewards of representative items, we leverage this relationship and the hierarchical structure between key-terms and items to develop more efficient recommendation algorithms. 
% Such item and key-term reward relationship makes it possible to take advantages of the hierarchical structure between key-terms and items for efficient online learning. 
Since maximizing the cumulative reward is achieved by quickly identifying the item that most appeals to the user, we can accelerate our search by first identifying the most appealing key-term (which, from our results above, will contain the most appealing item).
% However, deciding when to switch between key-terms and items is still the core and challenging problem. Switching to items too fast risks identifying the wrong optimal key-term, making it impossible to find the optimal item, while over-exploration of key-terms negates any runtime benefits from exploring key-terms instead of items.
To do so, we propose two algorithms, Hier-UCB (upper confidence bound) and Hier-LinUCB, for stochastic and contextual conversational bandit settings, respectively. We prove the theoretical regret bound of the proposed algorithm reduces the dependency on the total number of items, from $|\mathcal{A}|$ in previous work~\cite{zhang2020conversational} to $|\mathcal{K}| + |\mathcal{A}_{k^*}|$, where $|\mathcal{A}|$ is the size of the item pool, $|\mathcal{K}|$ is the size of the key-term pool, and $|\mathcal{A}_{k^*}|$ is the number of constituent items of the best key-term.
Our regret analysis is also different from the previous hierarchical Bayesian bandit works~\cite{wan2021metadata,hong2021hierarchical,kveton2021meta,hong2022deep,basu2021no}: with a careful treatment of the regrets caused by choosing sub-optimal key-terms or items, we provide a stronger frequentist regret bound that, unlike the previous Bayes regret bounds, is agnostic to the quality of prior information.
Experimental results on a synthetic dataset show that Hier-UCB outperforms both traditional UCB and Hier-LinUCB with a moderated item pool. We run Hier-UCB with three real-world datasets with large item pools and find it outperforms baseline algorithms from previous works.

In summary, our \textbf{research contributions} are as follows:
\begin{itemize}
	\item To enable flexible key-term question frequencies for preference elicitation, we \emph{formulate a conversational bandit problem} that explicitly models the exploration-exploitation (EE) trade-off between key-term questions and item recommendation.
	\item We \emph{study the relationship between item and key-term rewards based on a survey and analysis of a real-world dataset}. We have an essential observation for our algorithm design that the key-term rewards are mainly determined by the rewards of representative items.
	\item We propose \emph{Hier-UCB and Hier-LinUCB}, efficient bandit algorithms that leverage the observed relationship and the hierarchical structure between attributes and items. We prove that Hier-UCB reduces the theoretical regret bound's dependency on the total number of items in prior work.
	\item We \emph{validate our proposed algorithms and regret bounds} on both synthetic and real-world data.
\end{itemize}
The rest of the paper is organized as follows. We formulate the new conversational bandit problem and study the item-key-term reward relationship in Section~\ref{sec:model}. In Section~\ref{sec:algorithm}, we propose Hier-UCB and Hier-LinUCB and analyze their regrets. The experimental setup and results are presented in Section~\ref{sec:exp}. We discuss some related works in Section~\ref{sec:related} and conclude the paper in Section~\ref{sec:conclusion}.
\section{Related Work}\label{sec:related}
\subsection{Conversational Preference Elicitation}
The recent advances of conversational recommender systems (CRS) provide a promising way to efficiently elicit users’ preferences~\cite{dalton2018vote,sardella2019approach,christakopoulou2016towards,zhang2020conversational,xie2021comparison,christakopoulou2018q,priyogi2019preference}. 
By developing advanced models and efficient algorithms in CRS, we can promisingly derive more information about the user preferences and understand their preferences more accurately~\cite{zhang2018towards,chen2019towards,zou2020towards,li2018towards,liu2020towards,yu2019visual,zhang2019text,zhou2020improving,xu2022neural,zhou2020towards}.
%\tong{
%With the recent advances in deep learning and natural language processing (NLP), considerable efforts have been dedicated to integrating the conversation systems and recommender systems to ask more information about the user preferences in conversation~\cite{zhang2018towards,chen2019towards,zou2020towards} and to better understand the user preferences in user responses~\cite{chen2019towards,li2018towards,liu2020towards,yu2019visual,zhou2020improving,zhou2020towards}. 
%By formulating the recommendation procedure as a multi-step decision-making process, other researchers adopt reinforcement learning (RL) methods~\cite{deng2021unified,lei2020estimation,lei2020interactive,sun2018conversational,xu2021adapting,yu2019visual} to decide what to ask in conversations and when to conduct the conversations.
%}
To determine when to have the conversations and what to ask during conversations, previous works formulate the task as a multi-step decision-making process and propose reinforcement learning based approaches \cite{deng2021unified,lei2020estimation,lei2020interactive,sun2018conversational,xu2021adapting}.
The most related works of this paper are a series of studies that utilize a multi-armed bandit framework for preference elicitation from cold-start users in an online manner. 
Christakopoulou et al.~\cite{christakopoulou2016towards} first introduce the use of bandit-based strategies for preference elicitation into CRS. 
Zhang et al.~\cite{zhang2020conversational} formulate the conversational contextual bandit problem that integrates linear contextual bandits and CRS. They propose the ConUCB algorithm, one of our baselines, which achieves a smaller theoretical regret upper bound than the conventional contextual bandit algorithm LinUCB~\cite{li2010contextual}. 
A follow-up~\cite{li2021seamlessly} models key-term asking and item recommendation in an unified framework and proposes a bandit solution based on Thompson Sampling.
Xie et al.~\cite{xie2021comparison} introduce a comparison-based CRS with relative feedback to the comparisons of key-terms. 
However, none of these works explicitly model the cost (i.e., incurred regret) of the conversations that ask users about their key-term preferences in the conversational bandit framework. Thus, they may let users suffer from excessive conversational interactions. To the best of our knowledge, we are the first to formally model the regret of key-term asking and study the new EE trade-off between key-term asking and item recommendation, to enable more flexible conversations. We further revisit the relationship between key-term and item rewards and, based on our results, propose a new model for key-term rewards as determined by the rewards of representative items, instead of all items as assumed in prior work.

\subsection{Bandits with Hierarchy}
There is a branch of research that relates to our proposed bandit algorithms that take advantages of a hierarchy structure. Hierarchical Bayesian bandits have been studied in~\cite{wan2021metadata,hong2021hierarchical,kveton2021meta,hong2022deep,basu2021no}, with applications in meta-learning and multi-task learning.  A hierarchical Bayesian framework is proposed in \cite{wan2021metadata} to solve the metadata-based multi-task multi-armed bandit problem. Hong et al.~\cite{hong2021hierarchical} formulate a general hierarchical Bayesian bandit problem for solving multiple similar bandit tasks sequentially or concurrently, where each task is parameterized by task parameters drawn from a distribution parameterized by hyper-parameters. Hong et al.~\cite{hong2022deep} study the bandit problem with a deep hierarchical structure other than the traditional 2-level Bayesian hierarchy. 
However, all these works rely on the prior knowledge of arms and specific Bayesian hierarchical structures that may not be practical in real-world applications. Our proposed algorithms do not require any prior information and utilize the simple hierarchical structure between key-terms and items that is available in most applications. Moreover, we provide a stronger frequentist regret bound than the Bayes regret bounds in previous works, which is agnostic to the quality of priors. Our analysis for the frequentist regret is also different from that for the Bayes regret: without the hierarchical Bayesian structure, we need to carefully bound the regrets caused by the 4 cases in Section 3.3.

\section{Problem Formulation}\label{sec:model}
In this section, we will formulate conversational recommendation as a conversational bandit problem. We start by introducing the stochastic conversational bandit scenario where an agent can recommend either a key-term or an item to a user. 
We then discuss the limitations of the previous works' models of key-term rewards and propose a new key-term reward function, based on both human evaluation and data analysis. 
We also provide a contextual version of our new conversational bandit problem.

\subsection{Stochastic Conversational Bandit}\label{sec:stochastic}
We formulate a novel stochastic conversational bandit problem that allows the recommender agent to choose either key-term asking or item recommendation at each round. It explicitly models the regret caused by key-term asking, which is ignored by prior work~\cite{zhang2020conversational,li2021seamlessly,xie2021comparison}.
The goal of the agent is to maximize the cumulative reward in $T$ rounds through repeated key-term and item recommendations.

Consider a finite set of items (e.g., restaurants) denoted by $\mathcal{A}$ and a finite set of key-terms (e.g., restaurant categories) denoted by $\mathcal{K}$. The relationships between the arms and the key-terms are defined by a weighted bipartite graph $(\mathcal{A},\mathcal{K},W)$, where item $a\in\mathcal{A}$ is associated to a key-term $k\in\mathcal{K}$ with weight $W_{a,k} \ge 0$. Without loss of generality, we assume $\sum_{k\in\mathcal{K}}W_{a,k}=1$. More discussion of the weights can be found in Section~\ref{sec:key-term}.

At each round $t=1,\cdots,T$, the agent needs to choose an item $a_t\in\mathcal{A}$ or a key-term $k_t\in\mathcal{K}$ and shows it to the user. 
If the agent chooses an item, it will receive an item reward:
\begin{equation}
    r_{a_t,t} = \mu_{a_t} + \epsilon_{t},
\end{equation}
where $\mu_{a_t}$ is the expected reward of item $a_t$ and $\epsilon_{t}$ is a random variable representing the random noise. Similarly, if the agent chooses a key-term, it will receive a key-term reward:
\begin{equation}
    \tilde{r}_{k_t,t} = \tilde{\mu}_{k_t} + \tilde{\epsilon}_{t},
\end{equation}
where $\tilde{\mu}_{k_t}$ is the expected reward of key-term $k_t$ and $\tilde{\epsilon}_{t}$ is a random variable representing the random noise. For ease of presentation, if the agent chooses an item $a_t\in\mathcal{A}$, we assume it also chooses $k_t=\emptyset$; if the agent chooses a key-term $k_t\in\mathcal{K}$, we assume it also chooses $a_t=\emptyset$. With a slight abuse of notation, we let $r_{\emptyset,t} = \tilde{r}_{\emptyset,t} = 0$. We define the real obtained reward of the agent at round $t$ as:
\begin{equation}
    R_{a_t, k_t, t} = r_{a_t,t} + \tilde{r}_{k_t,t}.
\end{equation}
The goal of the agent is to maximize the expected cumulative rewards.
Let $\sum_{t=1}^{T}\mathbb{E}[R_{a^*_t,k^*_t,t}]$ denote the maximum expected cumulative reward in $T$ rounds, where $(a^*_t,k^*_t)$ is the optimal action at round $t$, i.e., $\mathbb{E}[R_{a^*_t,k^*_t,t}] \ge \mathbb{E}[R_{a,k,t}]$, $\forall a\in \mathcal{A}, b=\emptyset$ or $\forall k\in \mathcal{K}, a=\emptyset$. 
The goal of the agent is to maximize the expected cumulative reward or,
equivalently, to minimize the expected cumulative regret in $T$ rounds:
\begin{equation}\label{eq:regret}
    Reg(T) = \sum_{t=1}^T \left(\mathbb{E}[R_{a^*_t,k^*_t,t}] - \mathbb{E}[R_{a_t,k_t,t}]\right).
\end{equation}

\subsection{Key-term Reward Revisited}\label{sec:key-term}
\begin{figure}
    \centering
    \includegraphics[width=0.3\textwidth,trim={0 0.7cm 0 1.2cm},clip]{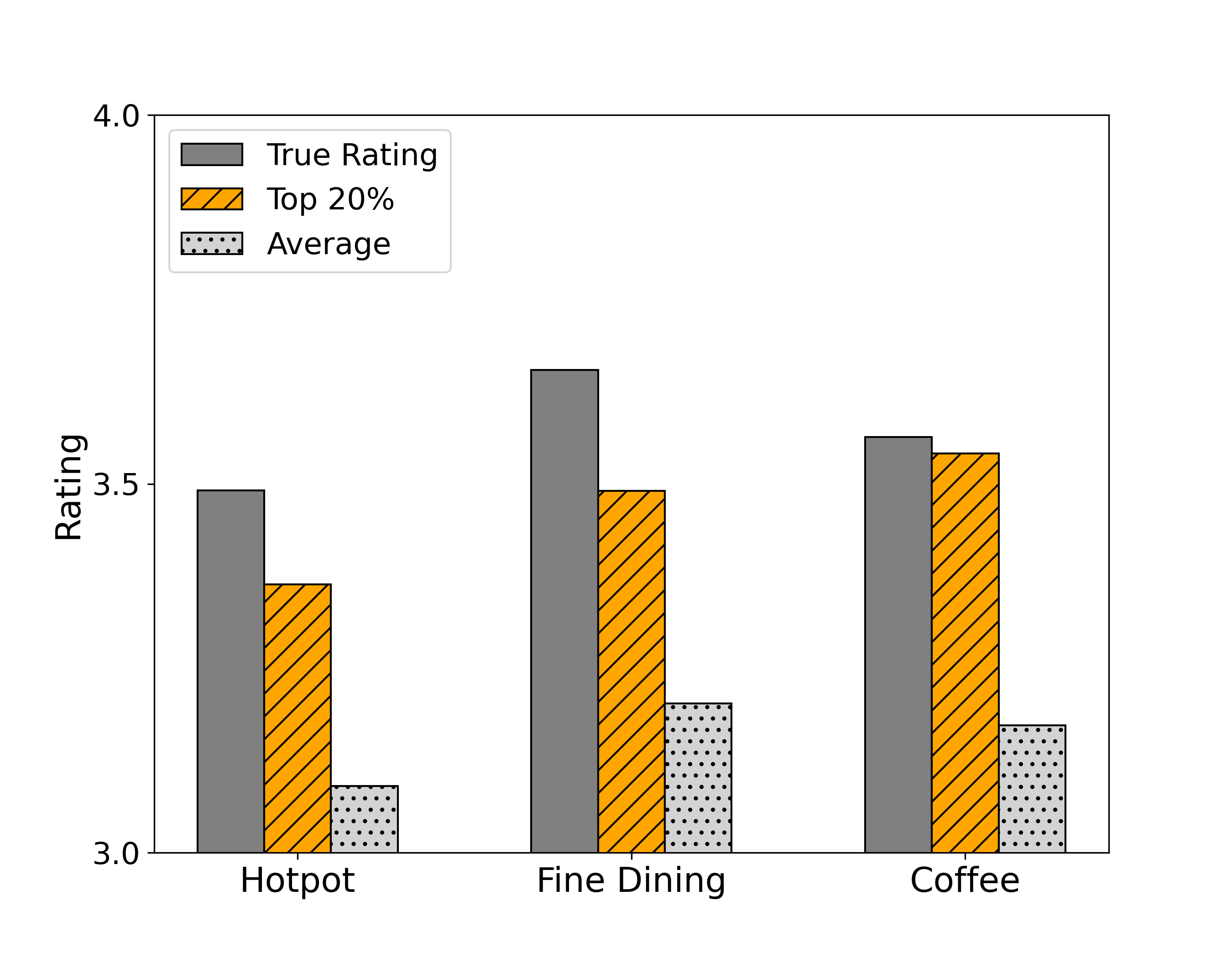}
    \caption{Restaurant category ratings from our survey are closer to the ratings of the top 20\% of items in each category than the average ratings of all items in each category.}
    \label{fig:he}
\end{figure}
\begin{figure}
    \centering
    \includegraphics[width=0.28\textwidth]{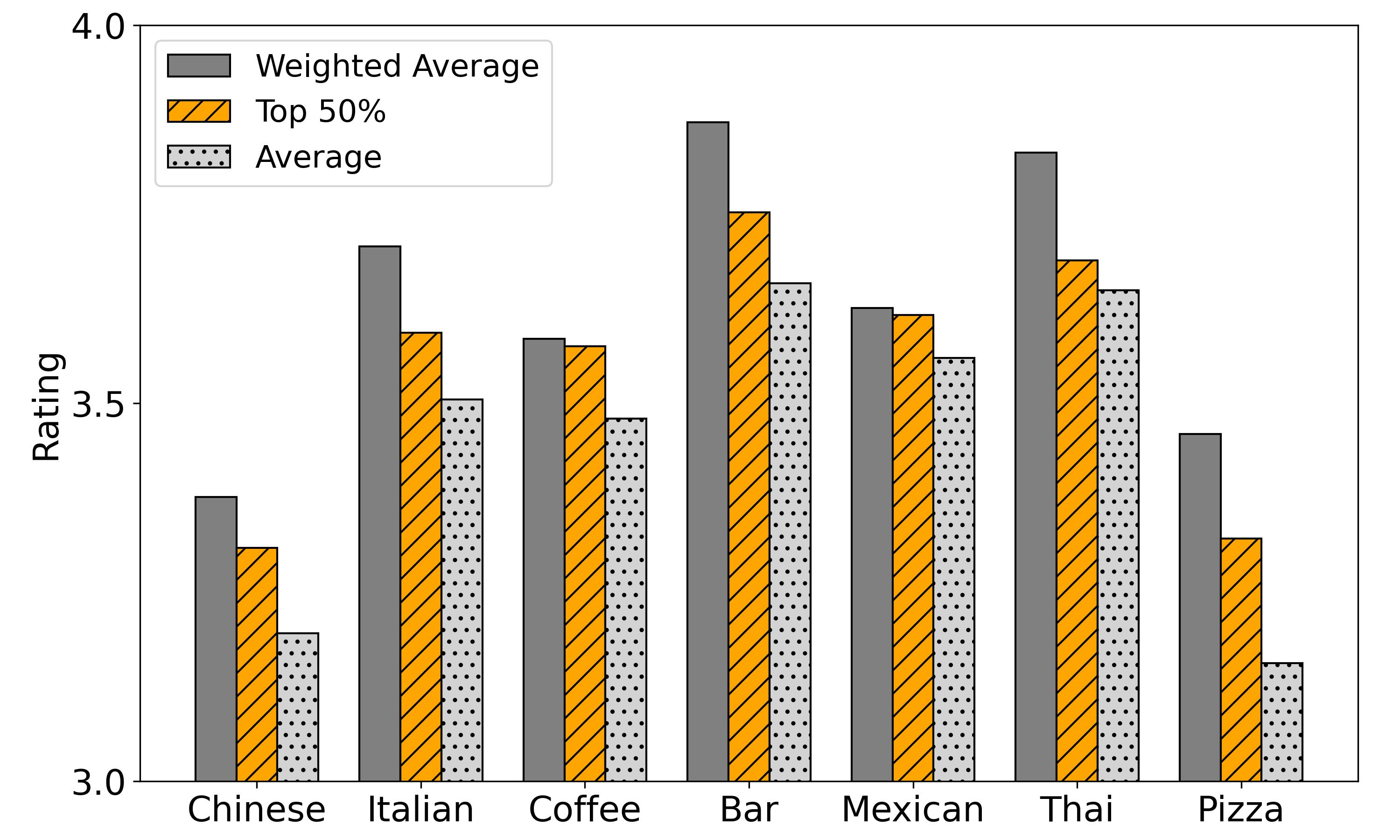}
    \caption{Restaurant category ratings from the Yelp dataset, showing similar category-item relationships as in Figure~\ref{fig:he}.}
    %. The observations are similar as the ones in our survey.}
    \label{fig:he_yelp}
\end{figure}

Since we allow the agent to choose from key-term or item recommendations, the relationship between key-term and item rewards plays an important role in deciding which action to take at each round. Previous work~\cite{zhang2020conversational} proposes that the reward of key-term $k$ is the weighted average of the rewards of all related items:
\begin{equation}
    \mathbb{E}[\tilde{r}_{k,t}] = \sum_{a\in\mathcal{A}} \frac{W_{a,k}}{\sum_{a'\in\mathcal{A}}W_{a',k}}\mathbb{E}[r_{a,t}],\,\,\, k\in\mathcal{K}.
\end{equation}
However, it is unrealistic for the recommender agent to know the exact weight $W_{a,k}$ between item $a$ and key-term $k$ in advance. It is also impractical to learn these weights online. 
As a result, previous work considers simplified binary weights for real applications, i.e., $W_{a,k}=1$ if item $a$ belongs to key-term $k$, otherwise $W_{a,k}=0$.
For example, if there are only 3 restaurants $a_1,a_2,a_3$ belonging to key-term $k_1$, then the expected reward of key-term $k_1$ is calculated as $\mathbb{E}[\tilde{r}_{k_1,t}]=\frac{\mathbb{E}[r_{a_1,t}] + \mathbb{E}[r_{a_2,t}] + \mathbb{E}[r_{a_3,t}]}{3}$. We call this key-term reward calculation method the "simple average", and we study its correctness from a survey and data analysis on a real dataset.

%\tong{}
We conduct the survey by asking the participant questions. Each participant is asked to first rate 3 restaurant categories (key-terms), "Hotpot", "Fine Dining", and "Coffee", and then 10 restaurants (items) randomly sampled from 50 restaurants for each category, on a scale of 1 to 5. 
The rating of category has an extra option "I have never tried", and participants who check this option will not be counted into the analysis of the corresponding category. In total, there are 55 participants.
The true ratings of different categories and the ratings calculated by a simple average are shown in Figure~\ref{fig:he}. We find the simple average ratings are always smaller 
% \carlee{do you have standard deviations across participants? it's hard to tell if the difference is significant otherwise.}\jinhang{how to explain that} 
than the true ratings, which suggests that simple average is not an accurate way to calculate key-term rewards. 
One potential explanation is that people's preferences for a restaurant category are mainly determined by their preferences for some representative restaurants, so the simple average method underestimates the key-term reward as it treats all associated restaurants in the same category equally.

To verify this idea, we propose a new key-term reward calculation method called "top-$\alpha$ average", which takes the average reward of the top $\alpha$ fraction of restaurants as the key-term reward. We choose $\alpha=20\%$ for better comparison with the simple average ratings, while $\alpha\in[0.1, 0.5]$ generally provides similar results. As shown in Figure~\ref{fig:he}, top-$20\%$ average ratings are much closer to the true ratings than simple average ratings.

Since the scale of the survey is relatively small, to further verify our findings, we run a {\bf data analysis on the Yelp dataset~\cite{yelp_dataset}}.
We choose $7$ restaurant categories with over $10,000$ reviews for each. We calculate the weighted average ratings of them, by taking the review counts of restaurants as the weights. This key-term reward can be interpreted as the average score of a review for a restaurant in that category. Notice that the recommender agent should not know the review counts before its recommendations, so weighted average ratings are only used for evaluation but not recommendation.
We compare weighted average ratings with simple average ratings and top-$50\%$ average ratings in Figure~\ref{fig:he_yelp}. Top-$50\%$ average ratings are closer to weighted average ratings than simple average ratings, which is consistent with our findings from our survey.

\subsection{Contextual Conversational Bandit}
We next introduce a contextual version of the new conversational bandit problem, which allows us to more easily accommodate large item pools. The agent plays with a user with unknown feature $\bm{\theta}^*\in \mathbb{R}^d$ for $T$ rounds. At each round $t$,  the agent observes contextual vector $\bm{x}_{a,t}$ for all items $a\in\mathcal{A}$ and contextual vector $\tilde{\bm{x}}_{k,t}$ for all key-terms $k\in\mathcal{K}$. It then needs to choose an item $a_t\in\mathcal{A}$ or a key-term $k_t\in\mathcal{K}$ and shows it to the user, as in the non-contextual problem. We follow LinUCB~\cite{li2010contextual} in modeling the reward as a linear function of the context.
If the agent chooses an item, it will receive an item reward:
\begin{equation}
    r_{a_t,t} = \bm{x}^T_{a_t,t}\bm{\theta}^* + \epsilon_{t},
\end{equation}
where $\bm{x}^T_{a_t,t}\bm{\theta}^*$ is the expected reward of item $a_t$ and $\epsilon_{t}$ is a random variable representing the random noise. Similarly, if the agent chooses a key-term, it will receive a key-term reward:
\begin{equation}
    \tilde{r}_{k_t,t} = \tilde{\bm{x}}^T_{k_t,t}\bm{\theta}^* + \tilde{\epsilon}_{t},
\end{equation}
where $\tilde{\mu}_{k_t}$ is the expected reward of key-term $k_t$ and $\tilde{\epsilon}_{t}$ is a random variable representing the random noise. The definitions of the obtained reward $R_{a_t, k_t, t}$ and the regret $Reg(T)$ are the same as those in Section~\ref{sec:stochastic}. The goal of the agent is still to minimize the expected regret $Reg(T)$, defined as in \eqref{eq:regret}.
\section{Algorithm \& Theoretical Analysis}\label{sec:algorithm}
In this section, we propose two algorithms, Hier-UCB and Hier-LinUCB, for the stochastic and contextual conversational bandit settings, respectively. As discussed in Section~\ref{sec:key-term}, the reward of a key-term is mainly determined by the rewards of the representative items, and the proposed algorithms take advantages of this finding and the hierarchical structure between key-terms and items to achieve more efficient learning.

% "top-$\alpha$ average" is a good approximation of the true key-term reward. 
% For ease of presentation, we assume that the reward of key-term $k$ is equal to the discounted reward of the best associated item, i.e., $\mathbb{E}[\tilde{r}_{k,t}]= \lambda \cdot \max_{a}\left( W_{a,k}\mathbb{E}[r_{a,t}]\right)$, where $\lambda\le1$ is a discount factor. \carlee{do you need this assumption for the algorithms? If not, move to the regret analysis section} Notice that our algorithms can be easily extended to a more general case, which will be discussed in the later section.
% \jinhang{may change to the assumption that $\mathbb{E}[\tilde{r}_{k,t}] \ge \mathbb{E}[\tilde{r}_{k',t}]$ if $\mathbb{E}[r_{a^*_{k},t}] \ge \mathbb{E}[r_{a^*_{k'},t}]$ and compare to the categorized bandits assumption}.
% Though we assume the key-term reward $\mathbb{E}[\tilde{r}_{k,t}]= \lambda \cdot \max_{a}\left( W_{a,k}\mathbb{E}[r_{a,t}]\right)$, Hier-UCB can also work under a more general assumption that $a^*_{k^*} = \argmax_a \mathbb{E}[r_{a,t}]$, i.e., the best item in the key-term with the highest reward also has the highest reward among all items.

\subsection{Hier-UCB}
Inspired by the well-known upper-confidence bound (UCB) solution algorithm for traditional multi-armed bandit formulations, we propose a Hierarchical UCB (Hier-UCB) algorithm described in Algorithm~\ref{alg:Hier-UCB} for the stochastic setting of our conversational bandit problem. The algorithm maintains the empirical mean $\hat{\mu}_{a}, \tilde{\hat{\mu}}_{k}$ and a confidence radius $\rho_{a}, \tilde{\rho_{k}}$ for each item $a\in\mathcal{A}$ and key-term $k\in\mathcal{K}$.
The confidence radius is designed to be large if item $a$ or key-term $k$ is not chosen often ($T_{a}$ or $\tilde{T}_{k}$, which denote the number of times item $a$ or key-term $k$ has been recommended, is small) as is typical in UCB-based algorithms, so as to encourage exploration of rarely sampled arms or key-terms.

At each round, the algorithm first calculates key-term $\bar{k}^*$ with the highest UCB value among all key-terms, and then finds item $\bar{a}^*$ with the highest UCB value from all associated items of key-term $\bar{k}^*$ (line 6). It then decides whether to select an item or a key-term based on a switching condition (line 7): if the condition is not satisfied, which indicates the key-terms have not been sufficiently explored, it will
recommend key-term $\bar{k}^*$ (line 8) for one time, then recommend item $\bar{a}^*$ (line 12); otherwise, it will only recommend item $\bar{a}^*$ (line 12). Note that both key-term and item recommendations can incur regret.
Intuitively, the switching condition ensures that an item is chosen if a conservative estimate of the best item reward exceeds a generous estimate of the best key-term reward: in other words, if the best item's reward exceeds the best key-term's reward with high probability. The parameter $\gamma$ in the switching condition controls how easily the condition can be satisfied, and Theorem~\ref{thm:hierUCB} gives the appropriate values of $\gamma$.

\begin{algorithm}[tb]
 \caption{Hier-UCB}\label{alg:Hier-UCB}
 \begin{algorithmic}[1]
 \STATE \textbf{Input}: graph $(\mathcal{A},\mathcal{K}, W), \gamma$.
 \STATE \textbf{Init}: $T_a = 0, \tilde{T}_k = 0, \hat{\mu}_a = 1, \tilde{\hat{\mu}}_k =  1$.
 \WHILE{$t \le T$}
    \STATE For each item $a$, $\rho_a = \sqrt{\frac{3\ln t}{2 T_a}}$.
    \STATE For each key-term $k$, $\tilde{\rho}_{k} = \sqrt{\frac{3\ln t}{2\tilde{T}_{k}}}$.
    \STATE $\bar{k}^* = \argmax_{k\in\mathcal{K}} \tilde{\hat{\mu}}_k + \tilde{\rho}_{k}$, $\bar{a}^* = \argmax_{a\in\mathcal{A}} W_{a,\bar{k}^*}(\hat{\mu}_{a} + \rho_{a})$
    \IF {\NOT($\hat{\mu}_{\bar{a}^*} - \gamma \cdot \rho_{\bar{a}^*} \ge \tilde{\hat{\mu}}_{\bar{k}^*} + \gamma \cdot \tilde{\rho}_{\bar{k}^*}$)}
    \STATE Choose key-term $\bar{k}^*$, receive reward $\tilde{r}_{\bar{k}^*,t}$.
    \STATE Update $\tilde{T}_{\bar{k}^*}$, $\tilde{\hat{\mu}}_{\bar{k}^*}$: $\tilde{T}_{\bar{k}^*} = \tilde{T}_{\bar{k}^*}+1, \tilde{\hat{\mu}}_{\bar{k}^*} = \tilde{\hat{\mu}}_{\bar{k}^*} + (\tilde{r}_{\bar{k}^*,t}-\tilde{\hat{\mu}}_{\bar{k}^*}) / \tilde{T}_{\bar{k}^*}$.\STATE $t=t+1$.
    \ENDIF
    \STATE Choose item $\bar{a}^*$, receive reward $r_{\bar{a}^*,t}$.
    \STATE Update $T_{\bar{a}^*}$, $\hat{\mu}_{\bar{a}^*}$: $T_{\bar{a}^*} = T_{\bar{a}^*} + 1, \hat{\mu}_{\bar{a}^*} = \hat{\mu}_{\bar{a}^*} + (r_{\bar{a}^*,t}-\hat{\mu}_{\bar{a}^*}) / T_{\bar{a}^*}$.\STATE $t=t+1$.
 \ENDWHILE
\end{algorithmic} 
\end{algorithm}

\subsection{Hier-LinUCB}
We propose a Hierarchical LinUCB (Hier-LinUCB) algorithm, described in Algorithm~\ref{alg:Hier-LinUCB}, for the contextual setting. It is a modified version of Hier-UCB based on the LinUCB~\cite{li2010contextual} algorithm for contextual bandits.
Hier-LinUCB maintains user feature estimate $\bm{\theta}, \tilde{\bm{\theta}}$ and confidence radius $C_{a,t}, \tilde{C}_{k,t}$ for each item $a\in\mathcal{A}$ and key-term $k\in\mathcal{K}$. 
% \carlee{This is a bit confusing; does the $\theta$ estimate depend on each item or key-term? Or do you only have two estimates, one for items and one for key-terms? Why maintain separate estimates for the same thing?}
At each round, the algorithm first calculates the key-term $\bar{k}^*$ with the highest UCB value among all key-terms (line 7) and then finds the item $\bar{a}^*$ with the highest UCB value from all associated items of key-term $\bar{k}^*$ (line 8). It then decides whether to only explore on item $\bar{a}^*$ based on a switching condition (line 9): if the condition is false, which indicates the key-terms have not been sufficiently explored, it will recommend key-term $\bar{k}^*$ for one time (line 10), then recommend item $\bar{a}^*$ for one time (line 14).
Otherwise, it will only choose item $\bar{a}^*$ to recommend (line 14). Again, the switching condition controls when to recommend key-terms or items and is based on comparing a conservative estimate of the best item reward with a generous estimate of the key-term reward. The parameter $\gamma$ again controls how conservative or generous these estimates are.

\begin{algorithm}[tb]
 \caption{Hier-LinUCB}\label{alg:Hier-LinUCB}
 \begin{algorithmic}[1]
 \STATE \textbf{Input}: graph $(\mathcal{A},\mathcal{K}, W), \gamma$.
 \STATE \textbf{Init}: $\tilde{\bm{M}} = \bm{I}, \tilde{\bm{b}} = 0, \bm{M} = \bm{I}, \bm{b} = 0$.
 \WHILE{$t \le T$}
    \STATE $\tilde{\bm{\theta}} = \tilde{\bm{M}}^{-1}\tilde{\bm{b}}, \bm{\theta} = \bm{M}^{-1}\bm{b}$.
    \STATE For each item $a$, $C_{a,t} = \alpha_{t}||\tilde{\bm{x}}_{k,t}||_{\tilde{\bm{M}}^{-1}}$.
    \STATE For each key-term $k$, $\tilde{C}_{k,t} =\alpha_{t}||\tilde{\bm{x}}_{k,t}||_{\tilde{\bm{M}}^{-1}}$.
    \STATE $\bar{k}^* = \argmax_{k\in\mathcal{K}} \tilde{\bm{x}}_{k,t}^{T}\tilde{\bm{\theta}} + \tilde{C}_{k,t}$.
    \STATE $\bar{a}^* = \argmax_{a\in\mathcal{A}} W_{a,\bar{k}^*}\left(\bm{x}_{a,t}^{T}\bm{\theta} + C_{a,t}\right)$.
    \IF{\NOT($\bm{x}_{\bar{a}^*,t}^{T}\bm{\theta} - \gamma C_{\bar{a}^*,t} \ge \tilde{\bm{x}}_{\bar{k}^*,t}^{T}\tilde{\bm{\theta}} + \gamma \tilde{C}_{\bar{k}^*,t}$)}
        \STATE Choose key-term $\bar{k}^*$, receive reward $\tilde{r}_{\bar{k}^*}$.
        \STATE Update $\tilde{\bm{M}}$, $\tilde{\bm{b}}$: 
        $\tilde{\bm{M}} = \tilde{\bm{M}} + \tilde{\bm{x}}_{\bar{k}^*,t}\tilde{\bm{x}}_{\bar{k}^*,t}^{T}$, $\tilde{\bm{b}} = \tilde{\bm{b}} + \tilde{\bm{x}}_{\bar{k}^*,t}\tilde{r}_{\bar{k}^*,t}^{T}$.
        \STATE $t=t+1$.
    \ENDIF
    \STATE Choose item $\bar{a}^*$, receive reward $\bar{r}_{a^*,t}$.
    \STATE Update $\bm{M}$, $\bm{b}$: $\bm{M} = \bm{M} + \bm{x}_{\bar{a}^*,t}\bm{x}_{\bar{a}^*,t}^{T}$, $\bm{b} = \bm{b} + \bm{x}_{\bar{a}^*,t}r_{\bar{a}^*,t}^{T}$.
    \STATE $t=t+1$.
 \ENDWHILE
\end{algorithmic} 
\end{algorithm}

\subsection{Regret Analysis}\label{sec:regret}
% For ease of presentation, we assume that the reward of key-term $k$ is equal to the discounted reward of the best associated item, i.e., $\mathbb{E}[\tilde{r}_{k,t}]= \lambda \cdot \max_{a}\left( W_{a,k}\mathbb{E}[r_{a,t}]\right)$, where $\lambda\le1$ is a discount factor. \carlee{do you need this assumption for the algorithms? If not, move to the regret analysis section} Notice that our algorithms can be easily extended to a more general case, which will be discussed in the later section.
% \jinhang{may change to the assumption that $\mathbb{E}[\tilde{r}_{k,t}] \ge \mathbb{E}[\tilde{r}_{k',t}]$ if $\mathbb{E}[r_{a^*_{k},t}] \ge \mathbb{E}[r_{a^*_{k'},t}]$ and compare to the categorized bandits assumption}.
% Though we assume the key-term reward $\mathbb{E}[\tilde{r}_{k,t}]= \lambda \cdot \max_{a}\left( W_{a,k}\mathbb{E}[r_{a,t}]\right)$, Hier-UCB can also work under a more general assumption that $a^*_{k^*} = \argmax_a \mathbb{E}[r_{a,t}]$, i.e., the best item in the key-term with the highest reward also has the highest reward among all items.
In this section, we analyze the regret \eqref{eq:regret} of Hier-UCB in the stochastic conversational setting.
Let $\mathcal{A}_k = \{a \mid a\in\mathcal{A}, W_{a,k} \neq 0\}$ denote the set of all associated items for key-term $k$, $k^*$ denote the key-term with the highest expected reward, and $a_k^* = \argmax_{a\in\mathcal{A}_k} \mu_a$. Note that $\bar{k}^*$ is the key-term with the current highest UCB value and may not be $k^*$.
In order to derive the regret bound, we assume that $a^*_{k^*} = \argmax_a \mathbb{E}[r_{a,t}]$, i.e., the best item in the key-term with the highest reward also has the highest reward among all items. This assumption is consistent with our observation in Section~\ref{sec:key-term} that the key-term rewards are mainly affected by the rewards of the best associated items.
It is also quite general compared to the first-order dominance assumption used to design previous hierarchical bandit algorithms~\cite{jedor2019categorized}, which requires the $i^{th}$ best item in key-term $k^*$ has larger reward than the $i^{th}$ best item of key-term $k$, for any $i,k$.
There are four cases in total that can cause regret in a given round:
\begin{enumerate}
    \item $\bar{k}^*$ is sub-optimal and switching condition is not satisfied
    \item $\bar{k}^*$ is sub-optimal and switching condition is satisfied
    \item $\bar{k}^*$ is optimal and switching condition is not satisfied
    \item $\bar{k}^*$ is optimal and switching condition is satisfied
\end{enumerate}
% \carlee{maybe explain here why you don't separately consider suboptimal $\bar{a}^*$?}
We consider the regret decomposed into these four cases separately.
\subsubsection{Sub-optimal $\bar{k}^*$, no switching}
For sub-optimal key-term $\bar{k}^*$, when the switching condition is not satisfied, we recommend key-term $\bar{k}^*$ and item $\bar{a}^*$. In this case, we only need to consider that 
%\carlee{I am a bit confused here. Shouldn't this inequality be reversed if $k$ is the true optimal key-term and $\bar{k}^*$ the key term that you recommended?}\jinhang{We want to count all bad events that UCB of $k$ is larger than UCB of $k^*$.}
\begin{equation}\label{eq:key-term_UCB}
    \exists \bar{k}^*\neq k^*, \tilde{\hat{\mu}}_{\bar{k}^*} +\tilde{\rho}_{\bar{k}^*} \ge \tilde{\hat{\mu}}_{k^*} +\tilde{\rho}_{k^*}.
\end{equation}
%\carlee{is the idea that $k = \bar{k}^*$ in this inequality? it is hard to keep track of $k^*$ vs. $\bar{k}^*$ vs. $k$ in these explanations}\jinhang{Yes,$k = \bar{k}^*$. I replace all $k$ with $\bar{k}^*$ then.}
By applying Hoeffding’s inequality on $\tilde{\hat\mu}_{\bar{k}^*}$, we have $\tilde{\mu}_{\bar{k}^*} -\tilde{\rho}_{\bar{k}^*} \le \tilde{\hat\mu}_{\bar{k}^*} \le \tilde{\mu}_{\bar{k}^*} + \tilde{\rho}_{\bar{k}^*}$ with high probability (at least $1 - 2|\mathcal{K}|t^{-2}$). Then \eqref{eq:key-term_UCB} becomes
\begin{equation}
    \exists \bar{k}^*\neq k^*, 2\tilde{\rho}_{\bar{k}^*} \geq \tilde{\mu}_{k^*}-\tilde{\mu}_{\bar{k}^*}.
\end{equation}
Substituting $\tilde{\rho}_{\bar{k}^*}$ with its definition $\tilde{\rho}_{\bar{k}^*}=\sqrt{\frac{3 \ln t}{2\tilde{T}_{\bar{k}^*}}}$, we have
\begin{equation}
    \mathbb{E}[\tilde{T}_{\bar{k}^*}] \le\frac{6\ln t}{(\tilde{\mu}_{k^*}-\tilde{\mu}_{\bar{k}^*})^2}.
\end{equation}
The regret caused by selecting sub-optimal key-terms and items is bounded as
\begin{equation}
    Reg_1(T) \leq \sum_{k\neq k^*} \mathbb{E}[\tilde{T}_k] \cdot (\mu_{a^*} - \tilde{\mu}_{k} + 1).
    % \le O\left(\frac{|\mathcal{K}| \ln T}{(\tilde{\mu}_{k^*}-\tilde{\mu}_{k})^2}\right)
\end{equation}

\subsubsection{Sub-optimal $\bar{k}^*$, switching}
By applying Hoeffding’s inequality on $\tilde{\hat\mu}_k$ and $\hat\mu_a$, we have $\tilde{\mu}_k -\tilde{\rho}_k \le \tilde{\hat\mu}_k \le \tilde{\mu}_k + \tilde{\rho}_k$ and $\mu_a -\rho_a \le \hat\mu_a \le \mu_a + \rho_a$ with high probability. If the switching condition is satisfied, we have
\begin{equation}
    \mu_{\bar{a}^*} + (1-\gamma) \cdot \rho_{\bar{a}^*} \ge \tilde{\mu}_{\bar{k}^*} + (\gamma-1) \cdot \tilde{\rho}_{\bar{k}^*},
\end{equation}
\begin{equation}\label{eq:gamma_1}
    \rho_{\bar{a}^*}+\tilde{\rho}_{\bar{k}^*} \leq \frac{1}{\gamma-1} (\mu_{\bar{a}^*} - \tilde{\mu}_{\bar{k}^*}).
\end{equation}
We also have $T_{\bar{a}^*}\le\tilde{T}_{\bar{k}^*}\le\frac{6\ln t}{(\tilde{\mu}_{k^*}-\tilde{\mu}_{\bar{k}^*})^2}$ at the first time that the switching condition is satisfied, which gives us
\begin{equation}\label{eq:gamma_2}
    \rho_{\bar{a}^*}+\tilde{\rho}_{\bar{k}^*} \ge \tilde{\mu}_{k^*}-\tilde{\mu}_{\bar{k}^*}.
\end{equation}
If $\gamma > \frac{\mu_{\bar{a}^*} - \tilde{\mu}_{\bar{k}^*}}{\tilde{\mu}_{k^*}-\tilde{\mu}_{\bar{k}^*}} + 1$, \eqref{eq:gamma_1} and \eqref{eq:gamma_2} will never be true at the same time. Thus, by taking $\gamma \ge \max_{k\neq k^*} \frac{\mu_{a_k^*} - \tilde{\mu}_{k}}{\tilde{\mu}_{k^*}-\tilde{\mu}_{k}} + 1$, the switching condition will never be satisfied and the regret caused by this part is
\begin{equation}
    Reg_2(T) \le O(1).
\end{equation}

\subsubsection{Optimal $\bar{k}^*$, no switching.}
Similar to the previous case, when the switching condition is not satisfied, we have
\begin{equation}
    \mu_{\bar{a}^*} - (\gamma+1) \cdot \rho_{\bar{a}^*} < \tilde{\mu}_{\bar{k}^*} + (\gamma+1) \cdot \tilde{\rho}_{\bar{k}^*},
\end{equation}
\begin{equation}\label{eq:rho_a}
    \rho_{\bar{a}^*}+\tilde{\rho}_{\bar{k}^*} > \frac{1}{\gamma+1} (\mu_{\bar{a}^*} - \tilde{\mu}_{\bar{k}^*}).
\end{equation}
We then bound $\mathbb{E}[\tilde{T}_{k^*}]$ with Lemma 1.
\begin{lemma}\label{lemma:T_k}
\begin{equation}
    \mathbb{E}[\tilde{T}_{k^*}] \le \frac{6(\gamma+1)^2\ln t}{(\mu_{a^*}-\tilde{\mu}_{k^*})^2} + \sum_{a\in\mathcal{A}_{k^*}, a\neq a^*}\frac{6\ln t}{(\mu_{a^*}-\mu_{a})^2}. 
\end{equation}
\end{lemma}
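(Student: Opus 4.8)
The plan is to bound the count $\tilde{T}_{k^*}$ by partitioning the rounds on which key-term $k^*$ is recommended (i.e. $\bar{k}^* = k^*$ and the switching condition fails) according to which item $\bar{a}^*$ is selected alongside it in line 12. Since by assumption $a^* := a^*_{k^*} = \argmax_a \mu_a$ lies in $\mathcal{A}_{k^*}$, every such round has $\bar{a}^*$ equal either to $a^*$ itself or to some sub-optimal $a \in \mathcal{A}_{k^*}$ with $a \neq a^*$; these two groups of rounds will produce the first and second terms of the bound, respectively. Throughout I would condition on the Hoeffding event that all empirical means lie within their confidence radii (the same event used earlier in Section~\ref{sec:regret}), whose failure probability is summable over rounds and hence contributes only an $O(1)$ term to the expectation.

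For the group with $\bar{a}^* = a \neq a^*$: on any such round $a$ is the highest-UCB item in $\mathcal{A}_{k^*}$ (taking the weights $W_{\cdot,k^*}$ binary as in the applications of Section~\ref{sec:key-term}, so that the weighted argmax in line 6 reduces to an ordinary UCB comparison over $\mathcal{A}_{k^*}$), giving $\hat{\mu}_a + \rho_a \ge \hat{\mu}_{a^*} + \rho_{a^*} \ge \mu_{a^*}$. Combined with $\hat{\mu}_a \le \mu_a + \rho_a$ this yields $\rho_a \ge (\mu_{a^*} - \mu_a)/2$, i.e. $T_a \le 6\ln t/(\mu_{a^*}-\mu_a)^2$, the familiar UCB pull count. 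Each key-term-$k^*$ round in this group also increments $T_a$, so the number of such rounds is at most $T_a$; summing over $a \in \mathcal{A}_{k^*},\, a \neq a^*$ yields the second term.

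For the group with $\bar{a}^* = a^*$: the failed switching condition gives, under the good event, inequality \eqref{eq:rho_a}, namely $\rho_{a^*} + \tilde{\rho}_{k^*} > (\mu_{a^*}-\tilde{\mu}_{k^*})/(\gamma+1)$. The key observation is that on every round of this group both $a^*$ and $k^*$ are pulled together (key-term in line 8, then item in line 12), so after $n$ such rounds $T_{a^*} \ge n$ and $\tilde{T}_{k^*} \ge n$, whence $\rho_{a^*}, \tilde{\rho}_{k^*} \le \sqrt{3\ln t/(2n)}$. Plugging into \eqref{eq:rho_a} forces $2\sqrt{3\ln t/(2n)} > (\mu_{a^*}-\tilde{\mu}_{k^*})/(\gamma+1)$, i.e. $n < 6(\gamma+1)^2\ln t/(\mu_{a^*}-\tilde{\mu}_{k^*})^2$, which is exactly the first term. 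Adding the two group bounds together with the $O(1)$ contribution from the rare failure of the concentration event gives the claim after taking expectations.

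The main obstacle I anticipate is securing the clean constant $6(\gamma+1)^2$ in the first term rather than twice that: the naive split of \eqref{eq:rho_a} into ``$\rho_{a^*}$ large'' and ``$\tilde{\rho}_{k^*}$ large'' double-counts and loses a factor of two. Avoiding it requires the lockstep observation that $a^*$ and $k^*$ are sampled simultaneously on precisely these rounds, so that a single counter $n$ controls both radii at once. A secondary point to treat carefully is the reduction of the weighted item selection in line 6 to a plain UCB rule: with non-binary weights the inequality $\hat{\mu}_a + \rho_a \ge \hat{\mu}_{a^*}+\rho_{a^*}$ need not hold, so I would either restrict to the binary-weight regime used in the experiments or absorb the weight ratios into the effective gaps.
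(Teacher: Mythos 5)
Your proof is correct and takes essentially the same route as the paper's: the same split of the key-term-$k^*$ pulls into rounds accompanied by a sub-optimal item of $\mathcal{A}_{k^*}$ (bounded by the standard UCB pull counts, giving the second term) and rounds accompanied by $a^*$ (bounded via the failed switching condition \eqref{eq:rho_a} together with the lockstep growth of $T_{a^*}$ and $\tilde{T}_{k^*}$, giving the first term). The paper merely packages the second part as a threshold contradiction built on $T_{a^*_{\bar{k}^*}} \ge \tilde{T}_{\bar{k}^*} - \sum_{a\neq a^*} T_a$ rather than your direct count, and your caveat about reducing the weighted argmax to a plain UCB rule over $\mathcal{A}_{k^*}$ applies equally to the paper's own argument.
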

\begin{proof}
For the sub-optimal key-term $\bar{k}^*$, before the switching condition is satisfied, we have
\begin{equation}\label{eq:T_a*}
    T_{a_{\bar{k}^*}^*} \ge \tilde{T}_{\bar{k}^*} - 
    \sum_{a\in\mathcal{A}_{\bar{k}^*}, a\neq a_{\bar{k}^*}^*} T_{a} \ge \tilde{T}_{\bar{k}^*} - \sum_{a\in\mathcal{A}_{k^*}, a\neq a^*}\frac{6\ln t}{(\mu_{a^*}-\mu_{a})^2}.
\end{equation}
Let us consider a threshold  for $\tilde{T}_{\bar{k}^*}$ and discuss two cases.
If $\tilde{T}_{\bar{k}^*} > \frac{6(\gamma+1)^2\ln t}{(\mu_{a^*}-\tilde{\mu}_{k^*})^2} + \sum_{a\in\mathcal{A}_{k^*}, a\neq a^*}\frac{6\ln t}{(\mu_{a^*}-\mu_{a})^2}$, with Eq.\eqref{eq:T_a*}, we have
\begin{equation}
    T_{a_{\bar{k}^*}^*} \ge \frac{6(\gamma+1)^2\ln t}{(\mu_{a^*}-\tilde{\mu}_{k^*})^2}.
\end{equation}
From Eq.\eqref{eq:rho_a}, we can also derive an upper bound for $T_{a_{\bar{k}^*}^*}$:
\begin{equation}\label{eq:T_a_upper}
    T_{a_{\bar{k}^*}^*} < \left(\frac{\mu_{\bar{a}^*} - \tilde{\mu}_{\bar{k}^*}}{(\gamma+1)\sqrt{3\ln t/2}} - \frac{1}{\sqrt{\tilde{T}_{\bar{k}^*}}} \right)^{-2}.
\end{equation}
However, since $\tilde{T}_{\bar{k}^*} > \frac{6(\gamma+1)^2\ln t}{(\mu_{a^*}-\tilde{\mu}_{k^*})^2} + \sum_{a\in\mathcal{A}_{k^*}, a\neq a^*}\frac{6\ln t}{(\mu_{a^*}-\mu_{a})^2}$, Eq.\eqref{eq:T_a_upper} becomes
\begin{equation}
    T_{a_{\bar{k}^*}^*} < \frac{6(\gamma+1)^2\ln t}{(\mu_{a^*}-\tilde{\mu}_{k^*})^2} + \sum_{a\in\mathcal{A}_{k^*}, a\neq a^*}\frac{6\ln t}{(\mu_{a^*}-\mu_{a})^2},
\end{equation}
which is contrary to Eq.\eqref{eq:T_a*}. Thus, we only need to consider the case that $\tilde{T}_{\bar{k}^*} \le \frac{6(\gamma+1)^2\ln t}{(\mu_{a^*}-\tilde{\mu}_{k^*})^2} + \sum_{a\in\mathcal{A}_{k^*}, a\neq a^*}\frac{6\ln t}{(\mu_{a^*}-\mu_{a})^2}$, which provides the upper bound
\begin{equation}
    \mathbb{E}[\tilde{T}_{k^*}] \le \frac{6(\gamma+1)^2\ln t}{(\mu_{a^*}-\tilde{\mu}_{k^*})^2} + \sum_{a\in\mathcal{A}_{k^*}, a\neq a^*}\frac{6\ln t}{(\mu_{a^*}-\mu_{a})^2}. 
\end{equation}
\end{proof}
The regret caused by this part is bounded as
\begin{equation}
    Reg_3(T) \leq \mathbb{E}[\tilde{T}_{k^*}] \cdot (\mu_{a^*} - \tilde{\mu}_{k^*} + 1).
\end{equation}

\subsubsection{Optimal $\bar{k}^*$, switching}
This part can be considered as the regret of the traditional UCB algorithm on the arm set $\mathcal{A}_{k^*}$. 
\begin{equation}
    Reg_4(T) \le \sum_{a\in\mathcal{A}_{k^*},a\neq a^*} \frac{6 \ln t}{\mu_{a^*} - \mu_{a}}.
\end{equation}

\subsubsection{Overall regret}
We define $\Delta_{a} = \mu_{a^*} - \mu_{a}, \tilde{\Delta}_{k} = \tilde{\mu}_{k^*} - \tilde{\mu}_{k}, \Delta^*_{k} = \mu_{a_{k}^*} - \tilde{\mu}_{k}$. The overall regret of Hier-UCB is given below.
\begin{theorem}\label{thm:hierUCB}
Assume that $\gamma > \max_{k\neq k^*} \frac{\Delta^*_{k}}{\tilde{\Delta}_k} + 1$, Hier-UCB has the following regret upper bound
\begin{align}
    Reg(T)=&\sum_{i=1}^4 Reg_i(T)\nonumber\\
    \le& 6\Bigg[\sum_{k\neq k^*} \frac{\Delta^*_{k} + 1}{\tilde{\Delta}_{k}^2} + \sum_{a\in\mathcal{A}_{k^*}, a\neq a^*}\frac{\Delta^*_{k^*}+\Delta_a+1}{\Delta_a^2} \nonumber\\
    &+ \sum_{a\in\mathcal{A}_{k^*}, a\neq a^*}\frac{(\gamma+1)^2(\Delta^*_{k^*}+1)}{(\Delta^*_{k^*})^2}\Bigg]\ln T.
\end{align}
\end{theorem}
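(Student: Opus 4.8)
The plan is to assemble the final bound by summing the four case-wise contributions already isolated in Section~\ref{sec:regret}, since every round falls into exactly one of the four mutually exclusive and exhaustive cases (sub-optimal vs.\ optimal $\bar k^*$, crossed with switching satisfied vs.\ not). First I would formally justify the partition: $Reg(T)$ in \eqref{eq:regret} is a sum over rounds of per-round regrets, and each round is unambiguously classified by (i) whether the UCB-maximizing key-term $\bar k^*$ equals $k^*$ and (ii) whether the switching condition on line~7 holds. Hence $Reg(T)=\sum_{i=1}^4 Reg_i(T)$ is an exact identity, and I only need to bound each summand.

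Next I would dispose of the concentration failures. Every case bound is derived on the ``good event'' that Hoeffding's inequality holds simultaneously for all arms and key-terms, which fails with probability at most $2(|\mathcal{A}|+|\mathcal{K}|)t^{-2}$ at round $t$. Since $\sum_{t\ge 1} 2(|\mathcal{A}|+|\mathcal{K}|)t^{-2}$ converges and each round contributes $O(1)$ regret (rewards lie in $[0,1]$), the total regret charged to failure events is $O(1)$; I would fold this, together with $Reg_2(T)\le O(1)$, into the lower-order terms.

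Then I would substitute the established case bounds, uniformly replacing $\ln t$ by $\ln T$ since $t\le T$. For Case~1, combining the per-iteration charge (the key-term gap plus the at-most-one regret of its companion item pull) with $\mathbb{E}[\tilde T_k]\le 6\ln T/\tilde\Delta_k^2$ yields the first sum $6\sum_{k\neq k^*}(\Delta^*_k+1)/\tilde\Delta_k^2$. The hypothesis $\gamma>\max_{k\neq k^*}\Delta^*_k/\tilde\Delta_k+1$ collapses $Reg_2(T)$ to $O(1)$. For $Reg_3(T)\le\mathbb{E}[\tilde T_{k^*}](\Delta^*_{k^*}+1)$ I would insert Lemma~\ref{lemma:T_k}, which splits $\mathbb{E}[\tilde T_{k^*}]$ into a $6(\gamma+1)^2\ln T/(\Delta^*_{k^*})^2$ term and a $\sum_{a\neq a^*}6\ln T/\Delta_a^2$ term; the former produces the third sum, while the latter (times $\Delta^*_{k^*}+1$) is merged with $Reg_4(T)\le\sum_{a\neq a^*}6\ln T/\Delta_a$ via $(\Delta^*_{k^*}+1)/\Delta_a^2+1/\Delta_a=(\Delta^*_{k^*}+\Delta_a+1)/\Delta_a^2$ to form the second sum. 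Factoring out $6\ln T$ then gives the stated expression.

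I expect the main obstacle to be the bookkeeping around Case~3 rather than the final summation. The delicate point is that in the ``optimal $\bar k^*$, no switching'' regime the over-exploration of $k^*$ must be charged both to the key-term gap $\Delta^*_{k^*}$ and to the item gaps $\Delta_a$ inside $\mathcal{A}_{k^*}$, since an item of $\mathcal{A}_{k^*}$ is pulled alongside each key-term pull; ensuring these item contributions are consistently accounted for, and not double-counted when $Reg_3(T)$ and $Reg_4(T)$ are merged, is where I would be most careful. A secondary subtlety is verifying that the crude ``$+1$'' charge for the companion item recommendation in the no-switch cases (valid because rewards are bounded) is exactly what yields the additive ``$+1$'' in the numerators, and that the assumption $a_{k^*}^*=\argmax_a\mu_a$ is what lets me identify $\mu_{a^*}$ with $\mu_{a_{k^*}^*}$ so that $Reg_3(T)$ is expressible through $\Delta^*_{k^*}$.
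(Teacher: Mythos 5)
Your proposal is correct and follows essentially the same route as the paper: the same exhaustive four-case decomposition, the same per-case bounds (Case 1 via $\mathbb{E}[\tilde T_k]\le 6\ln T/\tilde\Delta_k^2$ times the gap-plus-one charge, Case 2 collapsed to $O(1)$ by the $\gamma$ condition, Case 3 via Lemma~\ref{lemma:T_k}, Case 4 as standard UCB on $\mathcal{A}_{k^*}$), and the same final algebraic merge of the Case 3 item-gap term with $Reg_4$ into $(\Delta^*_{k^*}+\Delta_a+1)/\Delta_a^2$. Your explicit handling of the concentration-failure event and your observation that the $(\gamma+1)^2$ term need not be summed over $\mathcal{A}_{k^*}$ are if anything slightly more careful than the paper's presentation.
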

{\bf Remark}. Looking at the above distribution dependent
bound, we have the $O\left((|\mathcal{K}| + |\mathcal{A}_{k^*}|)\ln T\right)$ regret, which is asymptotically tight in $T$. Since applying traditional UCB algorithms to item set $\mathcal{A}$ leads to a regret of $O\left(|\mathcal{A}|\ln T\right)$, we improve the regret bound by reducing $|\mathcal{A}|\ln T$ to $(|\mathcal{K}| + |\mathcal{A}_{k^*}|)\ln T$, owing to the utilization of hierarchical structure between key-terms and items. Notice that ConUCB~\cite{zhang2020conversational} in the stochastic setting will lead to the regret of $O\left(|\mathcal{A}|+\ln T\right)$ even without considering the regret caused by key-term asking. Thus, our proposed algorithms prevail over the others when the size of the item pool $|\mathcal{A}|$ is substantially large.
It is possible to analyze the regret of Hier-LinUCB following similar steps of regret decomposition for Hier-UCB, but it is more involved due to users' feature estimation and we left it for the future work.
\section{Experiments}\label{sec:exp}
We conduct experiments on both synthetic and real-world data to validate our proposed algorithms.
% We focus on evaluating the accumulated regret and averaged reward in these experiments, and also conduct a case study to validate the unique advantage of Hier-LinUCB over other algorithms.

\subsection{Experimental Setup}
In this section, we discuss our experimental setup. We first introduce two baseline algorithms, then describe the metrics used for evaluation. We also list three research questions that we would like to answer.
\subsubsection{Baselines}
%We choose the following two algorithms as baselines to be compared with.
We compare our algorithm with two baseline algorithms that do not properly model the relationship between user preferences to key-terms and items.
\begin{itemize}
    \item {\bf LinUCB}~\cite{li2010contextual}: A standard non-conversational contextual  bandit algorithm. This comparison demonstrates the benefits of Hier-LinUCB's ability to select key-terms.
    \item {\bf ConUCB}~\cite{zhang2020conversational}: A recent conversational bandit algorithm proposed in the contextual setting that employs a fixed frequency of choosing key-terms. This comparison demonstrates Hier-LinUCB's ability to adaptively select key-terms or items depending on each user, accounting for the cost of choosing key-terms.
\end{itemize}

\subsubsection{Metrics}
We use the cumulative regret in Eq.\eqref{eq:regret} to measure the performance of algorithms. 
As a common metric for bandit problems, it captures the gap between the reward of always choosing the optimal action and the reward of the action chosen by a specific algorithm.
Besides, in Section~\ref{sec:exp_real}, we show the average reward over iterations, $\frac{1}{N}\sum_{t=1}^{N} R_{a_t,k_t,t}$.

\subsubsection{Research Questions}
To validate our approach, we design the following research questions:
\begin{enumerate}
    \item[RQ1.] In a real-world setting, what relationship can be observed between user preferences to key-terms and items?
    \item[RQ2.] Can our algorithms achieve less regret than baseline algorithms by leveraging the relationship and the hierarchical structure between key-terms and items? 
    \item[RQ3.] Can our algorithms adapt to switch between key-term and item recommendations more flexibly than previous work?
\end{enumerate}

The extensive study in Section~\ref{sec:key-term} can answer RQ1. 
We observe that users' preference on a specific key-term is mainly affected by their preference on its representative items. 

\subsection{Synthetic Data}
We consider a stochastic conversational bandit setting with 10 key-terms and 100 items.
We create an item pool $\mathcal{A}=\{a_1,a_2,\cdots,a_{100}\}$ and a key-term pool $\mathcal{K}=\{k_1,k_2,\cdots,k_{10}\}$.  We assume each item is only associated with one key-term, and let $a_1, a_2, \cdots, a_{10}$ associate with $k_1$, $a_{11}, a_{12}, \cdots, a_{20}$ associate with $k_2$, etc. We set each item $a_i$'s expected reward as $\mu_{a_i} = i/100$. 
We assume that the reward of key-term $k$ is equal to the discounted reward of the best associated item, i.e., $\mathbb{E}[\tilde{r}_{k,t}]= \lambda \cdot \max_{a}\left( W_{a,k}\mathbb{E}[r_{a,t}]\right)$, where $\lambda\le1$ is a discount factor. Note that it is a special form of the general assumption required by Hier-UCB in Section~\ref{sec:regret}. We choose $\lambda=0.5$, then for each key-term $k_i$, $\tilde{\mu}_{k_i} = i/2$. The rewards of items and key-terms are generated from Bernoulli distributions. We run Hier-UCB, Hier-LinUCB and traditional UCB algorithms on these items and key-terms. We set $\gamma=1$ for both Hier-UCB and Hier-LinUCB, and $\alpha_t=1$ for Hier-LinUCB. For Hier-LinUCB, as it is designed for the contextual setting, we consider 100-dimensional one-hot contextual vectors for all items, and the algorithm will try to estimate a 100-dimensional feature vector. We repeat the experiment 50 times and show the average cumulative regret with 95\% confidence interval.

\paragraph{Answer to RQ2}
Figure~\ref{fig:reg_hierUCB} shows that Hier-UCB achieves less regret than traditional UCB and Hier-LinUCB in the stochastic setting, owing to the strategic balance between key-term and item recommendations: it spends around 2000 iterations on key-term learning, then switch to explore on a smaller item pool, thus achieves much less regret than UCB. This also validates our regret bound. As Hier-UCB is specifically designed for the stochastic setting with moderate-size item pools, it also outperforms Hier-LinUCB, which is designed for the contextual setting: the least square estimation of feature vectors in Hier-LinUCB may incur higher regrets than directly estimating the expected reward of each arm.

\begin{figure}
    \centering
    \includegraphics[width=0.3\textwidth]{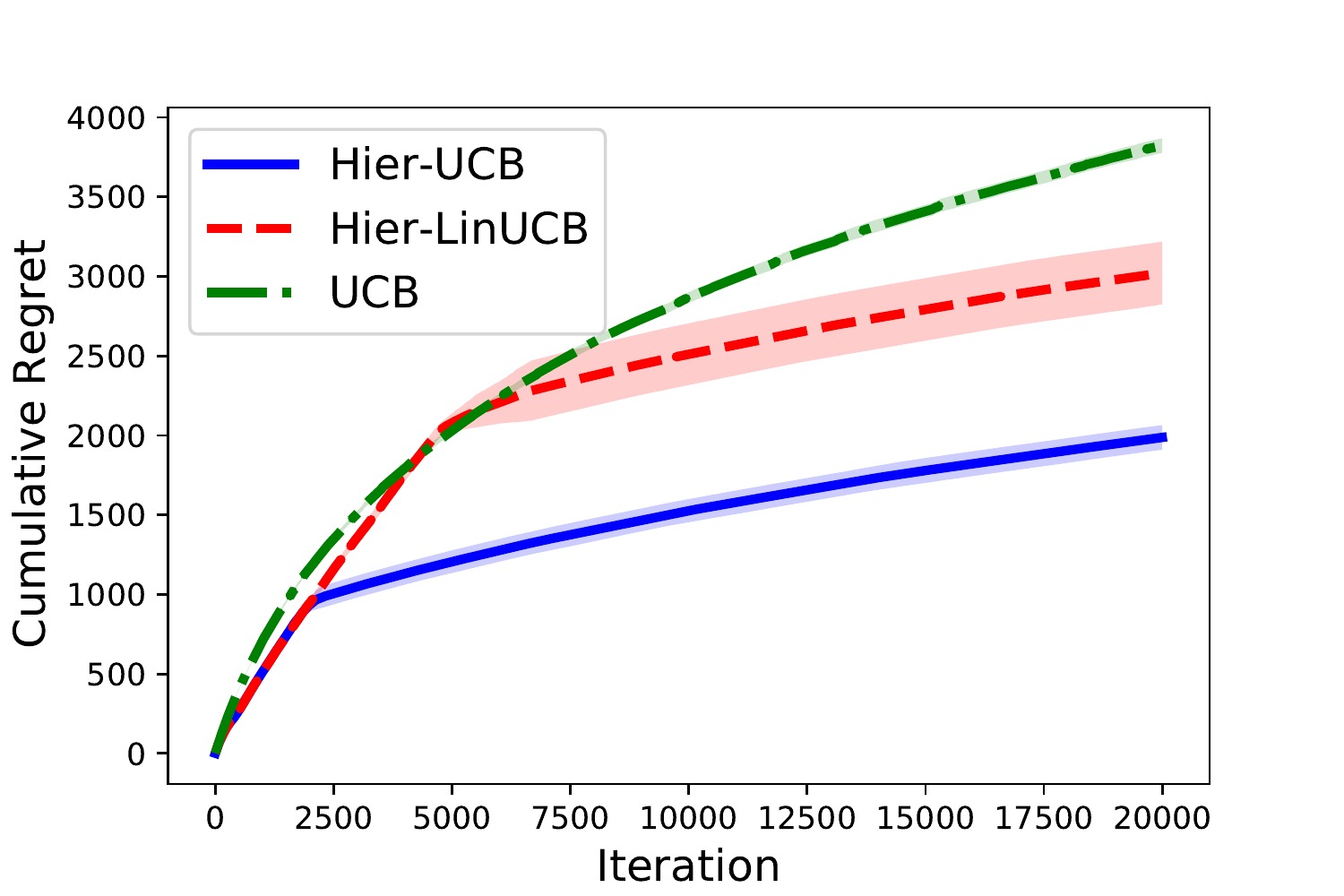}
    \caption{The result validates the regret bound of our algorithm in the stochastic setting.}
    \label{fig:reg_hierUCB}
\end{figure}

\subsection{Real-world Datasets}\label{sec:exp_real}
\subsubsection{Dataset Description}\label{sec:dataset}
In this section, we evaluate Hier-LinUCB on three real-world datasets, and compare its performance with LinUCB and ConUCB. The original datasets are released by Yelp\footnote{https://www.yelp.com/dataset/}, MovieLens\footnote{https://www.grouplens.com}, and LastFM\footnote{https://www.lastfm.com}.
In our experiments, we use the public source data prepossessed by \cite{zhang2020conversational} for the Yelp dataset and \cite{xie2021comparison} for the MovieLens and LastFM datasets.
The Yelp dataset has 1000 key-terms, 5000 items and 200 users, where the key-terms are restaurant categories (e.g., “Burgers”) and the items are restaurants.
The LastFM dataset has 2726 key-terms, 2000 items and 100 users, where the key-terms are artist tags (e.g., “Rock”) and the items are artists.
The Movielens dataset has 5585 key-terms, 2000 items and 200 users, where the key-terms are movie tags (e.g., “Action”) and the items are movies.
We aim to show our algorithms perform well on the Yelp and the LastFM datasets and that, although Movielens has more key-terms than items, which may reduce the benefit of key-term asking, our proposed algorithm can still achieve comparable performance to the previous baselines.
All datasets provide the hierarchical relationship between key-terms and items, the contextual vectors of key-terms and items, and the feature vectors of all users (detailed data generation methods can be found in Section 5.1 of \cite{zhang2020conversational} and Section 4.3 of \cite{xie2021comparison}).

% \textit{Yelp: 200 users, 1000 key-terms and 5000 items. Last FM: 100 users, 2726 key-terms, 2000 items. Movielens: 100 users, 5585 key-terms, 2000 items.}

% We follow ConUCB setting \cite{zhang2020conversational} for the Yelp dataset, and follow Relative ConUCB setting \cite{xie2021comparison} for Last FM and MovieLens dataset. 
For Hier-LinUCB, we set $\gamma=0.5$, $\alpha_{t}=1$ on Yelp and $\gamma=0.5$, $\alpha_{t}=0.25$ on LastFM and MovieLens. For LinUCB and ConUCB, we generally follow the original experimental settings in~\cite{zhang2020conversational,xie2021comparison}, while some experimental configurations are changed to fit our new scenario, which will be discussed in the next section. We choose $b(t) = 10 \lfloor \log(t) \rfloor$ as the conversation frequency function for ConUCB, which leads to the smallest regret in their experiments.

\begin{figure*}[t]
	\centering
	\begin{subfigure}[b]{0.32\textwidth}
		\centering
		\includegraphics[width=\textwidth,trim={0 1cm 0 0},clip]{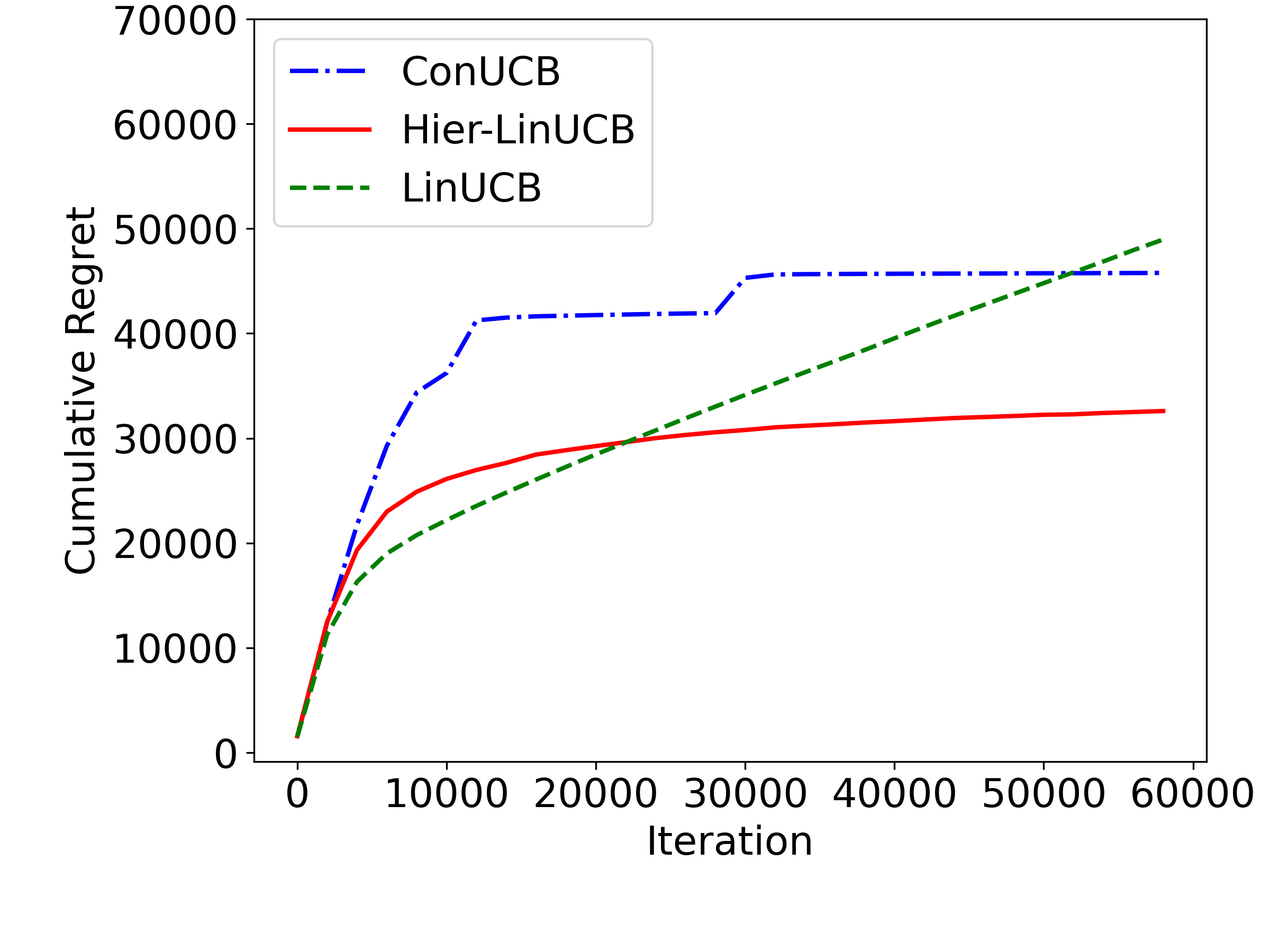}
		\caption{Yelp}
		\label{fig:Yelp_regret}
	\end{subfigure}
	\begin{subfigure}[b]{0.32\textwidth}
    	\centering
    	\includegraphics[width=\textwidth,trim={0 1cm 0 0},clip]{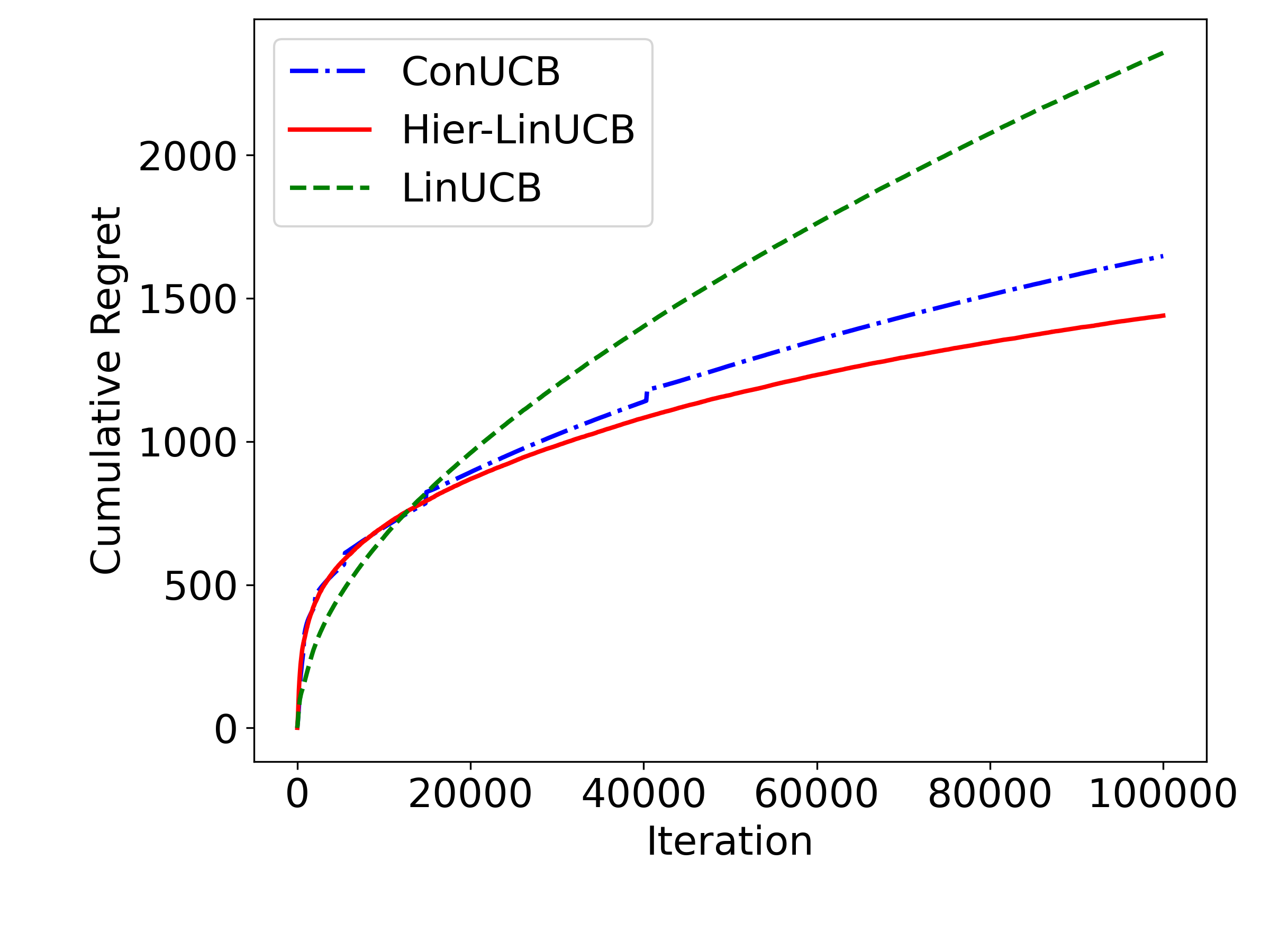}
    	\caption{LastFM}
    	\label{fig:LastFM_regret}
	\end{subfigure}
	\begin{subfigure}[b]{0.32\textwidth}
	\centering
	\includegraphics[width=\textwidth,trim={0 1cm 0 0},clip]{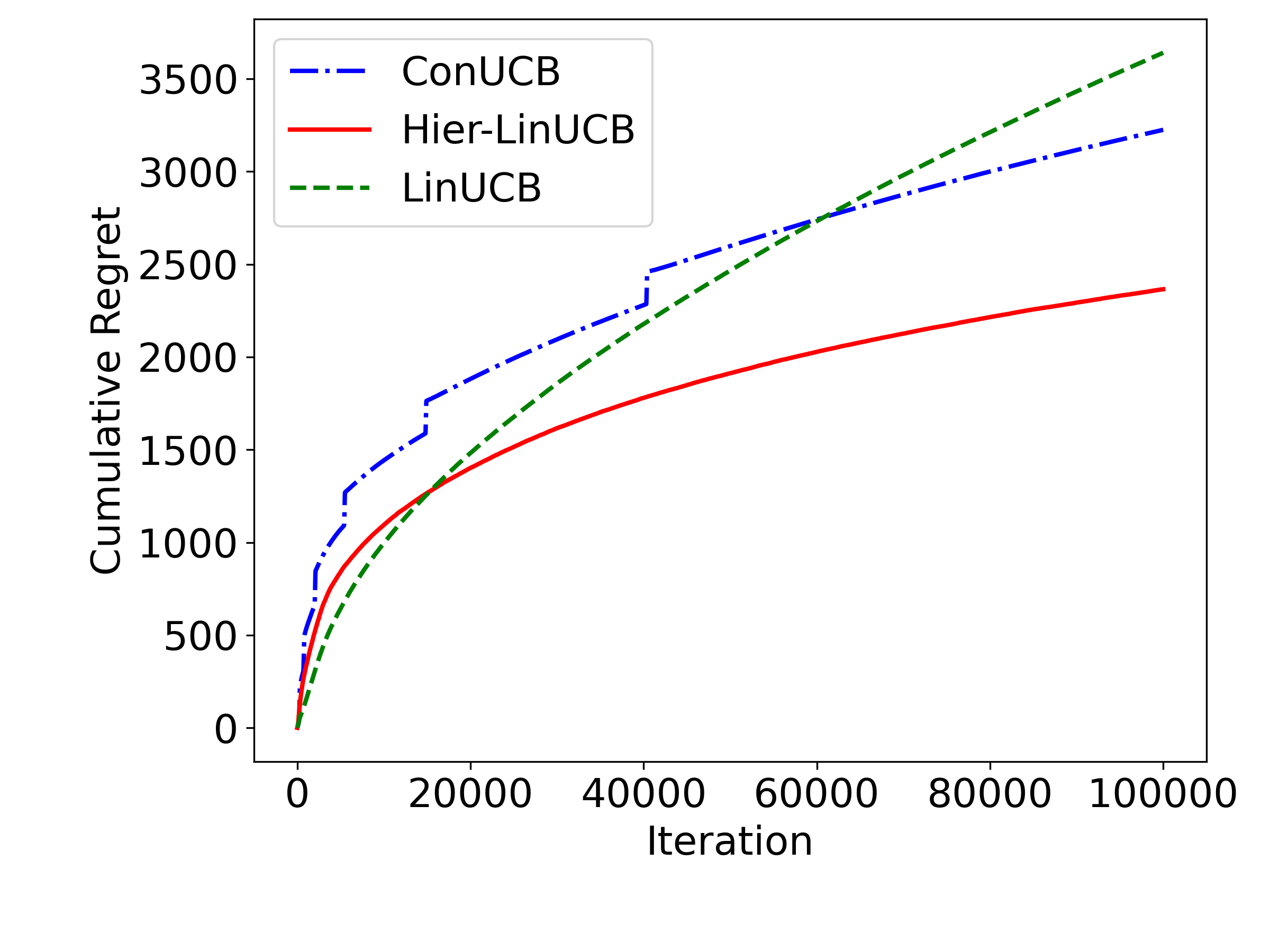}
	\caption{MovieLens}
	\label{fig:MovieLens_regret}
\end{subfigure}
	\caption{Comparison of the cumulative regrets on real-world datasets using Hier-LinUCB, LinUCB and ConUCB. Hier-LinUCB has the lowest converged regret after more than 30000 iterations. The improvement is most apparent on Yelp, likely because the MovieLens and LastFM datasets have a large number of key terms, requiring more key-term exploration.}
	\label{fig:all_regret}
\end{figure*}

\begin{figure*}[t]
	\centering
	\begin{subfigure}[b]{0.32\textwidth}
		\centering
		\includegraphics[width=\textwidth,trim={0 1cm 0 0},clip]{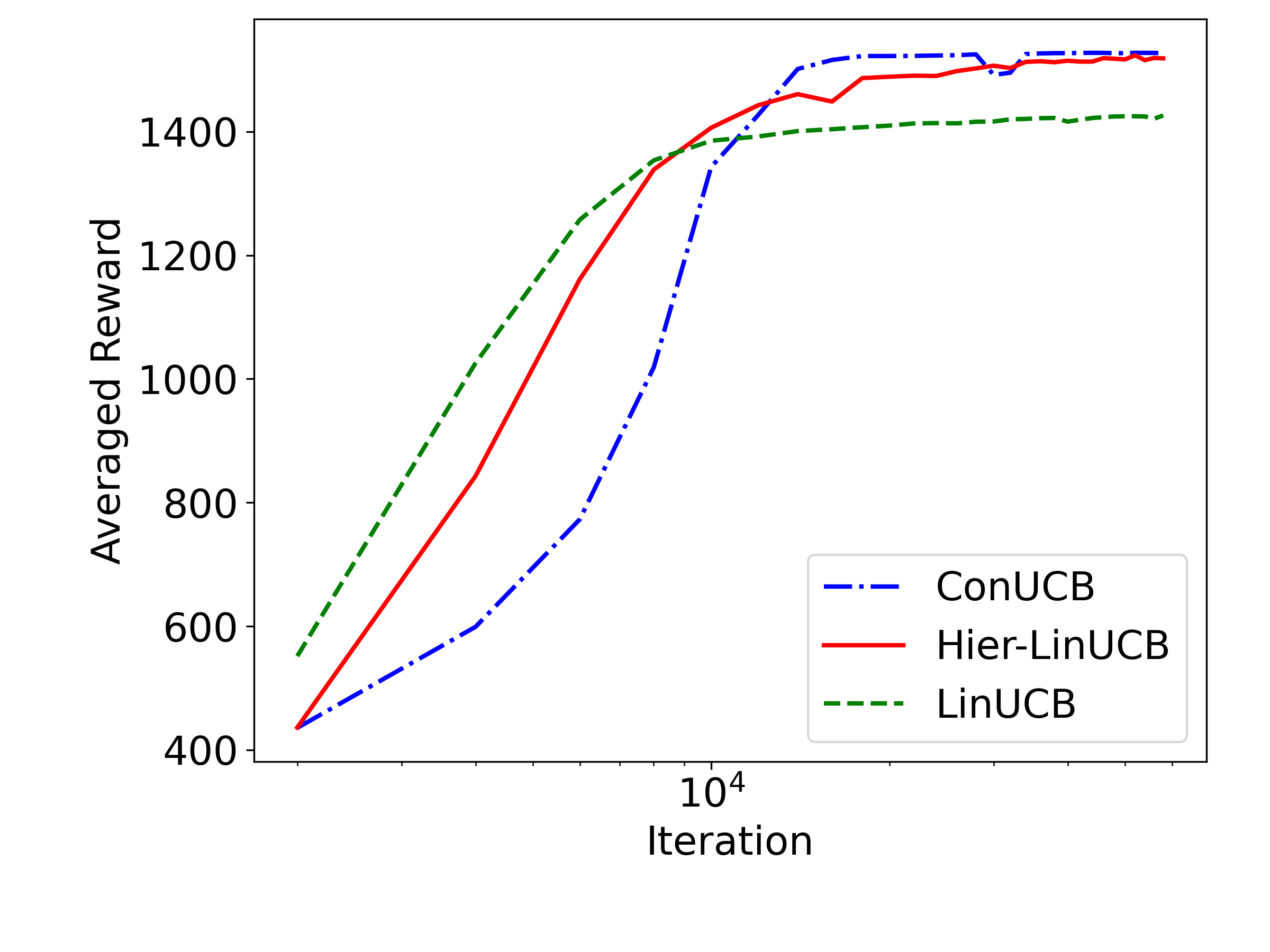}
		\caption{Yelp}
		\label{fig:Yelp_reward}
	\end{subfigure}
	\begin{subfigure}[b]{0.32\textwidth}
    	\centering
    	\includegraphics[width=\textwidth,trim={0 1cm 0 0},clip]{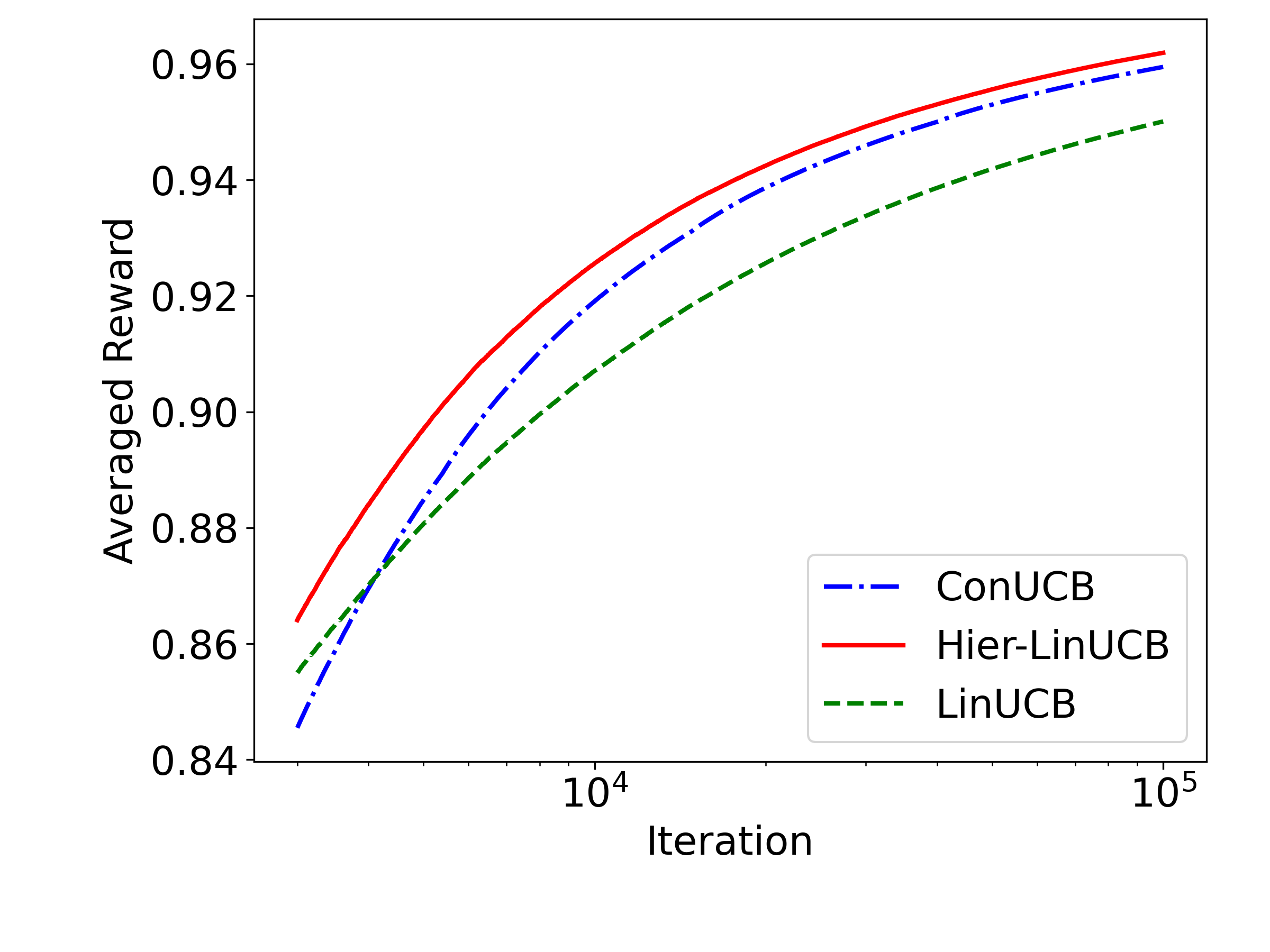}
    	\caption{LastFM}
    	\label{fig:LastFM_reward}
	\end{subfigure}
	\begin{subfigure}[b]{0.32\textwidth}
		\centering
		\includegraphics[width=\textwidth,trim={0 1cm 0 0},clip]{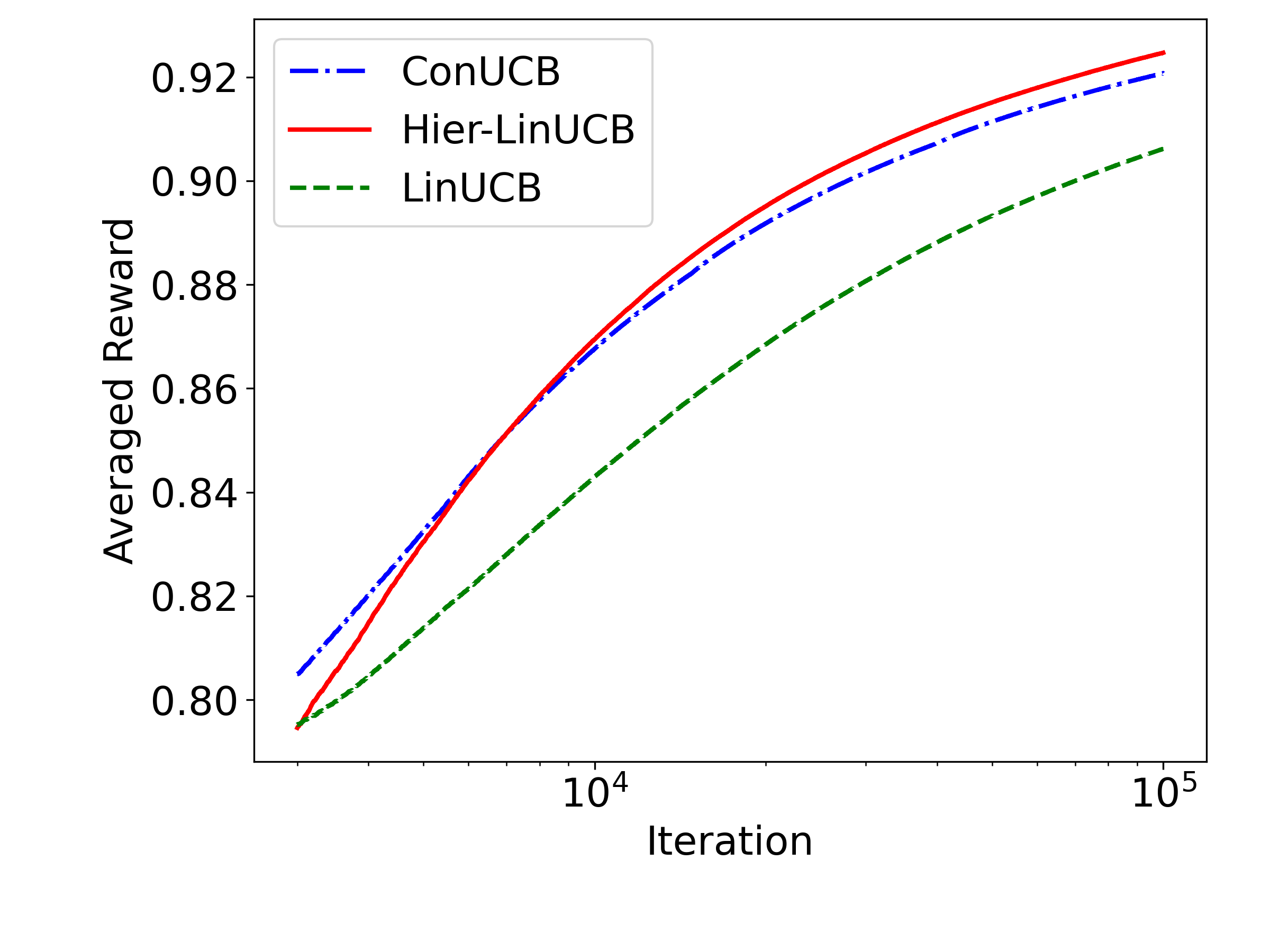}
		\caption{MovieLens}
		\label{fig:MovieLens_reward}
	\end{subfigure}
	\caption{Comparison of the averaged rewards on real-world datasets using Hier-LinUCB, LinUCB and ConUCB. The averaged reward of Hier-LinUCB grows faster than ConUCB on Yelp and LastFM, while slightly slower than ConUCB on MovieLens at the early stage. This is likely due to the large number of key-terms for MovieLens.}
	\label{fig:all_reward}
\end{figure*}

\subsubsection{Experimental Settings}
\paragraph{Key-term reward}
We assume that the reward of key-term $k$ is equal to the discounted reward of the best associated item, i.e., $\mathbb{E}[\tilde{r}_{k,t}]= \lambda \cdot \max_{a}\left( W_{a,k}\mathbb{E}[r_{a,t}]\right)$, where $\lambda\le1$ is a discount factor. It is a special form of the general assumption in Section~\ref{sec:regret}.
We set $\lambda = 0.5$ and re-calculate the expected rewards and feature vectors of key-terms.
% To correspond to the change of key-term reward, we change feature vector accordingly.
% so that the key-term reward is determined by the reward of the best associated item.
% For the reward of key-terms, we choose the best reward from the related item, and multiple it with discount. 
\paragraph{Item pool size} ConUCB~\cite{zhang2020conversational} randomly chooses an arm pool of 50 items at each round in their original experiments. However, such an arm pool is too small compared to the whole arm pool with thousands of items, and it is very likely to exclude many good items with high rewards, 
Instead, we use the full item pool in our experiments, which also makes the learning problem more challenging,
To encourage exploration, we set $\alpha_{t}=1$ for all baseline algorithms. 
We observe higher cumulative regrets than the original experiments, which is consistent with Figure 2 in~\cite{zhang2020conversational}.
\subsubsection{Results}
In our experiments, we compare the performance of Hier-LinUCB with LinUCB and ConUCB. We also conduct a case study of two users in the Yelp dataset with ConUCB.
\paragraph{Answer to RQ2}
We show the cumulative regrets and averaged rewards of different algorithms in Figures~\ref{fig:all_regret} and \ref{fig:all_reward}. Notice that the system recommends one key-term or item to one user at each interaction; when the averaged reward almost converges after 20000 interactions for Yelp, the average number of key-term asking and item recommendations per user is 100, which is reasonable after a few days or weeks.
Figure~\ref{fig:all_regret} shows that Hier-LinUCB achieves lower regrets than LinUCB and ConUCB after iteration ends.
At the very beginning, due to more exploration on key-terms, Hier-LinUCB may not achieve lower regrets than LinUCB and ConUCB. However, as long as it learns the correct key-terms, it will converge to the optimal actions much faster than others. We observe that the regret of ConUCB periodically increases, which comes from key term conversations predefined by the conversation frequency function~\cite{zhang2020conversational}.
To better observe the early state rewards and understand the behavior of different algorithms in cold-start recommender systems, we further compare the averaged reward with the logarithmic x-axis in Figure~\ref{fig:all_reward}.
%We use the logarithmic x-axis in Figure~\ref{fig:all_reward} as we care more about the early state rewards of cold-start recommender systems.
The results are shown in Figure~\ref{fig:all_reward}.
Although LinUCB can achieve good rewards at the early stage, it may not find the optimal item eventually compared to ConUCB and Hier-LinUCB, showing the necessity of key-term asking. Besides, the averaged rewards of ConUCB and Hier-LinUCB converge to the similar levels once they successfully identify the best key-terms. 
However, in Figures~\ref{fig:Yelp_reward} and \ref{fig:LastFM_reward}, Hier-LinUCB searches faster than ConUCB and has higher rewards for the early iterations, since the number of key-terms is similar to (or less than) the number of items for the Yelp and LastFM datasets. When the number of key-terms is larger than the number of items (MovieLens), as shown in Figure~\ref{fig:MovieLens_reward}, Hier-LinUCB has to explore more at first (resulting in initially lower rewards), before eventually achieving higher reward than ConUCB. 
Notice that the former case (\emph{i.e.}, number of key-terms is limited) is common in practice, since the key-terms of the items need to be labeled carefully, which usually leads to a lot of human efforts.
% In Figure~\ref{fig:Yelp_reward}, Hier-LinUCB has more rapid reward growth than ConUCB, which likely due to its adaptive exploration policy of key-terms.
% In Figure~\ref{fig:MovieLens_reward} and Figure~\ref{fig:LastFM_reward}, the difference of reward growth rate is less significant. This probably results from the key-term pool sizes of these two datasets. As introduced in Section~\ref{sec:dataset}, the key-term pool is smaller than the item pool for Yelp, while the key-term pool is larger than the item pool for the MovieLens and LastFM datasets, which makes the exploration of key-terms more difficult. The learning rate of Hier-LinUCB mainly depends on key-terms, so it is natural to have better performance on Yelp than MovieLens and LastFM. 
% Indeed, ConUCB also achieves similar rewards to LinUCB on LastFM, indicating that key-term exploration \emph{in general} may not be useful for finding the best item on this dataset.

\begin{figure}
    \centering
    \includegraphics[width=0.40\textwidth]{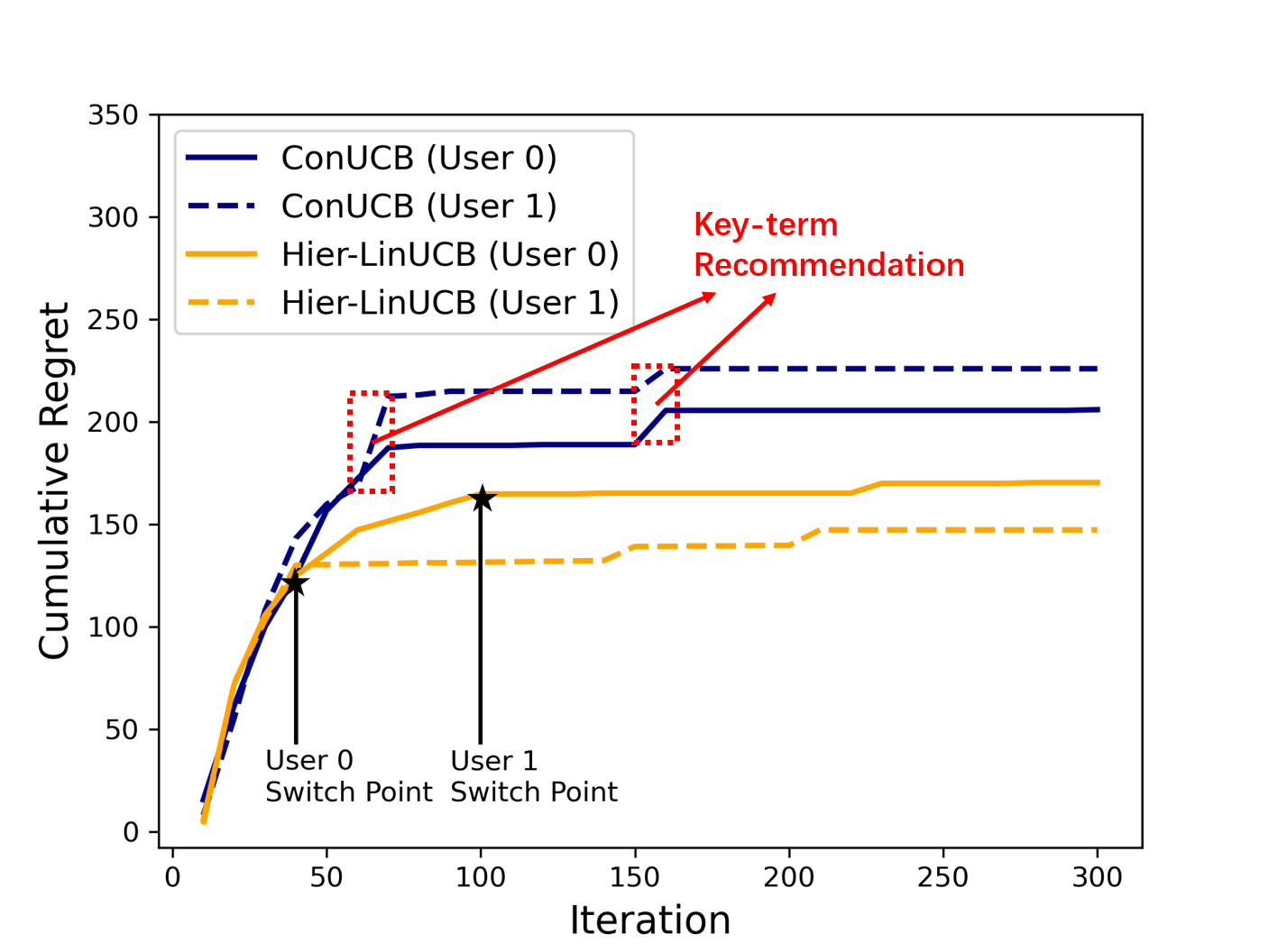}
    \caption{A case study of two users on the Yelp dataset. ConUCB has fixed exploration on key-terms (\emph{i.e.}, switching from item recommendations to key-term asking) for user 0 and user 1 (marked by red color), while Hier-LinUCB enables more flexible and personalized switching between key-term asking and item recommendations.}
    \label{fig:switch}
\end{figure}

\paragraph{Answer to RQ3}
We pick two users from the Yelp dataset and monitor their regrets in Figure~\ref{fig:switch}. We find Hier-LinUCB has different switch points for different users: before the switch point, it explores more on key-terms; after the switch point, it only chooses items and the regret converges quickly. 
By contrast, ConUCB has a fixed key-term conversation period as highlighted in red. Hier-LinUCB thus enables more flexible and personalized switching between key-term asking and item recommendations. As discussed before, the key-term exploration of Hier-LinUCB is based on the switching condition, while the key-term exploration of ConUCB is based on a fixed conversation frequency function. Hier-LinUCB treats User 1 and User 0 differently, while ConUCB uses a fixed exploration policy and results in worse regrets.

\section{Conclusion}\label{sec:conclusion}
To avoid the excessive conversational interactions incurred by the pre-defined conversation frequency of existing conversational bandit algorithms, we formulate a new conversational bandit problem that allows the recommender system to choose either a key-term or an item to recommend at each round.
To balancing the new exploration-exploitation (EE) trade-off between key-term asking and item recommendation, we need accurately understand the relationship between key-term and item rewards.
We study such relationship based on a survey and analysis on a real-world dataset. We observe that key-term rewards are mainly affected by rewards of representative items. We propose two bandit algorithms based on this observation and the hierarchical structure between key-terms and items. We prove the theoretical regret bound of our algorithm and validate them on synthetic and real-world data.

\begin{acks}
This work was supported by ONR Grant N000142112128.
\end{acks}
\clearpage
\balance
% Our work locates a problem of conversational recommendation, pointing out the recommendation of key-terms should be sampled from the best related items rather than simple average of them. Following the problem, we propose Hier-UCB, a bandit algorithm with hierarchical structure based on upper confidence bound. It also features top-based key-term reward and flexible key-term to item exploration switching. To apply the advantage of hierarchical structure and top-based key-term reward, we also proposed Hier-LinUCB, a contextual version of Hier-UCB that takes the user context into account. Three real world dataset experiments validates the advantage of Hier-LinUCB over ConUCB and LinUCB with its lower cumulative regret. A case study also validates the flexible switch in the contextual setting.

\bibliography{references}

\begin{thebibliography}{10}

\bibitem{deldjoo2018audio}
Yashar Deldjoo, Mihai~Gabriel Constantin, Hamid Eghbal-Zadeh, Bogdan Ionescu,
  Markus Schedl, and Paolo Cremonesi.
\newblock Audio-visual encoding of multimedia content for enhancing movie
  recommendations.
\newblock In {\em Proceedings of the 12th ACM Conference on Recommender
  Systems}, pages 455--459, 2018.

\bibitem{liu2015personalized}
Bin Liu, Deguang Kong, Lei Cen, Neil~Zhenqiang Gong, Hongxia Jin, and Hui
  Xiong.
\newblock Personalized mobile app recommendation: Reconciling app functionality
  and user privacy preference.
\newblock In {\em Proceedings of the eighth ACM international conference on web
  search and data mining}, pages 315--324, 2015.

\bibitem{zhang2020conversational}
Xiaoying Zhang, Hong Xie, Hang Li, and John CS~Lui.
\newblock Conversational contextual bandit: Algorithm and application.
\newblock In {\em Proceedings of The Web Conference 2020}, pages 662--672,
  2020.

\bibitem{liu2018transferable}
Bo~Liu, Ying Wei, Yu~Zhang, Zhixian Yan, and Qiang Yang.
\newblock Transferable contextual bandit for cross-domain recommendation.
\newblock In {\em Proceedings of the AAAI Conference on Artificial
  Intelligence}, 2018.

\bibitem{dalton2018vote}
Jeffrey Dalton, Victor Ajayi, and Richard Main.
\newblock Vote goat: Conversational movie recommendation.
\newblock In {\em The 41st international acm sigir conference on research \&
  development in information retrieval}, pages 1285--1288, 2018.

\bibitem{sardella2019approach}
Nicola Sardella, Claudio Biancalana, Alessandro Micarelli, and Giuseppe
  Sansonetti.
\newblock An approach to conversational recommendation of restaurants.
\newblock In {\em International Conference on Human-Computer Interaction},
  pages 123--130. Springer, 2019.

\bibitem{christakopoulou2016towards}
Konstantina Christakopoulou, Filip Radlinski, and Katja Hofmann.
\newblock Towards conversational recommender systems.
\newblock In {\em Proceedings of the 22nd ACM SIGKDD international conference
  on knowledge discovery and data mining}, pages 815--824, 2016.

\bibitem{xie2021comparison}
Zhihui Xie, Tong Yu, Canzhe Zhao, and Shuai Li.
\newblock Comparison-based conversational recommender system with relative
  bandit feedback.
\newblock In {\em Proceedings of the 44th International ACM SIGIR Conference on
  Research and Development in Information Retrieval}, pages 1400--1409, 2021.

\bibitem{zhao2022knowledge}
Canzhe Zhao, Tong Yu, Zhihui Xie, and Shuai Li.
\newblock Knowledge-aware conversational preference elicitation with bandit
  feedback.
\newblock In {\em Proceedings of the ACM Web Conference 2022}, pages 483--492,
  2022.

\bibitem{wu2021clustering}
Junda Wu, Canzhe Zhao, Tong Yu, Jingyang Li, and Shuai Li.
\newblock Clustering of conversational bandits for user preference learning and
  elicitation.
\newblock In {\em Proceedings of the 30th ACM International Conference on
  Information \& Knowledge Management}, pages 2129--2139, 2021.

\bibitem{li2021seamlessly}
Shijun Li, Wenqiang Lei, Qingyun Wu, Xiangnan He, Peng Jiang, and Tat-Seng
  Chua.
\newblock Seamlessly unifying attributes and items: Conversational
  recommendation for cold-start users.
\newblock {\em ACM Transactions on Information Systems (TOIS)}, 39(4):1--29,
  2021.

\bibitem{jia2020personalized}
Xiaowei Jia, Handong Zhao, Zhe Lin, Ajinkya Kale, and Vipin Kumar.
\newblock Personalized image retrieval with sparse graph representation
  learning.
\newblock In {\em Proceedings of the 26th ACM SIGKDD International Conference
  on Knowledge Discovery \& Data Mining}, pages 2735--2743, 2020.

\bibitem{wan2021metadata}
Runzhe Wan, Lin Ge, and Rui Song.
\newblock Metadata-based multi-task bandits with bayesian hierarchical models.
\newblock {\em Advances in Neural Information Processing Systems}, 34, 2021.

\bibitem{hong2021hierarchical}
Joey Hong, Branislav Kveton, Manzil Zaheer, and Mohammad Ghavamzadeh.
\newblock Hierarchical bayesian bandits.
\newblock {\em arXiv preprint arXiv:2111.06929}, 2021.

\bibitem{kveton2021meta}
Branislav Kveton, Mikhail Konobeev, Manzil Zaheer, Chih-wei Hsu, Martin
  Mladenov, Craig Boutilier, and Csaba Szepesvari.
\newblock Meta-thompson sampling.
\newblock In {\em International Conference on Machine Learning}, pages
  5884--5893. PMLR, 2021.

\bibitem{hong2022deep}
Joey Hong, Branislav Kveton, Sumeet Katariya, Manzil Zaheer, and Mohammad
  Ghavamzadeh.
\newblock Deep hierarchy in bandits.
\newblock {\em arXiv preprint arXiv:2202.01454}, 2022.

\bibitem{basu2021no}
Soumya Basu, Branislav Kveton, Manzil Zaheer, and Csaba Szepesv{\'a}ri.
\newblock No regrets for learning the prior in bandits.
\newblock {\em Advances in Neural Information Processing Systems}, 34, 2021.

\bibitem{christakopoulou2018q}
Konstantina Christakopoulou, Alex Beutel, Rui Li, Sagar Jain, and Ed~H Chi.
\newblock Q\&r: A two-stage approach toward interactive recommendation.
\newblock In {\em Proceedings of the 24th ACM SIGKDD International Conference
  on Knowledge Discovery \& Data Mining}, pages 139--148, 2018.

\bibitem{priyogi2019preference}
Bilih Priyogi.
\newblock Preference elicitation strategy for conversational recommender
  system.
\newblock In {\em Proceedings of the twelfth ACM international conference on
  web search and data mining}, pages 824--825, 2019.

\bibitem{zhang2018towards}
Yongfeng Zhang, Xu~Chen, Qingyao Ai, Liu Yang, and W~Bruce Croft.
\newblock Towards conversational search and recommendation: System ask, user
  respond.
\newblock In {\em Proceedings of the 27th acm international conference on
  information and knowledge management}, pages 177--186, 2018.

\bibitem{chen2019towards}
Qibin Chen, Junyang Lin, Yichang Zhang, Ming Ding, Yukuo Cen, Hongxia Yang, and
  Jie Tang.
\newblock Towards knowledge-based recommender dialog system.
\newblock In {\em Proceedings of the 2019 Conference on Empirical Methods in
  Natural Language Processing and the 9th International Joint Conference on
  Natural Language Processing (EMNLP-IJCNLP)}, pages 1803--1813, 2019.

\bibitem{zou2020towards}
Jie Zou, Yifan Chen, and Evangelos Kanoulas.
\newblock Towards question-based recommender systems.
\newblock In {\em Proceedings of the 43rd International ACM SIGIR Conference on
  Research and Development in Information Retrieval}, pages 881--890, 2020.

\bibitem{li2018towards}
Raymond Li, Samira Ebrahimi~Kahou, Hannes Schulz, Vincent Michalski, Laurent
  Charlin, and Chris Pal.
\newblock Towards deep conversational recommendations.
\newblock {\em Advances in neural information processing systems}, 31, 2018.

\bibitem{liu2020towards}
Zeming Liu, Haifeng Wang, Zheng-Yu Niu, Hua Wu, Wanxiang Che, and Ting Liu.
\newblock Towards conversational recommendation over multi-type dialogs.
\newblock In {\em Proceedings of the 58th Annual Meeting of the Association for
  Computational Linguistics}, pages 1036--1049, 2020.

\bibitem{yu2019visual}
Tong Yu, Yilin Shen, and Hongxia Jin.
\newblock A visual dialog augmented interactive recommender system.
\newblock In {\em Proceedings of the 25th ACM SIGKDD International Conference
  on Knowledge Discovery \& Data Mining}, pages 157--165, 2019.

\bibitem{zhang2019text}
Ruiyi Zhang, Tong Yu, Yilin Shen, Hongxia Jin, and Changyou Chen.
\newblock Text-based interactive recommendation via constraint-augmented
  reinforcement learning.
\newblock {\em Advances in neural information processing systems}, 32, 2019.

\bibitem{zhou2020improving}
Kun Zhou, Wayne~Xin Zhao, Shuqing Bian, Yuanhang Zhou, Ji-Rong Wen, and
  Jingsong Yu.
\newblock Improving conversational recommender systems via knowledge graph
  based semantic fusion.
\newblock In {\em Proceedings of the 26th ACM SIGKDD International Conference
  on Knowledge Discovery \& Data Mining}, pages 1006--1014, 2020.

\bibitem{xu2022neural}
Pan Xu, Zheng Wen, Handong Zhao, and Quanquan Gu.
\newblock Neural contextual bandits with deep representation and shallow
  exploration.
\newblock In {\em The Tenth International Conference on Learning
  Representations}, 2022.

\bibitem{zhou2020towards}
Kun Zhou, Yuanhang Zhou, Wayne~Xin Zhao, Xiaoke Wang, and Ji-Rong Wen.
\newblock Towards topic-guided conversational recommender system.
\newblock In {\em Proceedings of the 28th International Conference on
  Computational Linguistics}, pages 4128--4139, 2020.

\bibitem{deng2021unified}
Yang Deng, Yaliang Li, Fei Sun, Bolin Ding, and Wai Lam.
\newblock Unified conversational recommendation policy learning via graph-based
  reinforcement learning.
\newblock In {\em Proceedings of the 44th International ACM SIGIR Conference on
  Research and Development in Information Retrieval}, pages 1431--1441, 2021.

\bibitem{lei2020estimation}
Wenqiang Lei, Xiangnan He, Yisong Miao, Qingyun Wu, Richang Hong, Min-Yen Kan,
  and Tat-Seng Chua.
\newblock Estimation-action-reflection: Towards deep interaction between
  conversational and recommender systems.
\newblock In {\em Proceedings of the 13th International Conference on Web
  Search and Data Mining}, pages 304--312, 2020.

\bibitem{lei2020interactive}
Wenqiang Lei, Gangyi Zhang, Xiangnan He, Yisong Miao, Xiang Wang, Liang Chen,
  and Tat-Seng Chua.
\newblock Interactive path reasoning on graph for conversational
  recommendation.
\newblock In {\em Proceedings of the 26th ACM SIGKDD International Conference
  on Knowledge Discovery \& Data Mining}, pages 2073--2083, 2020.

\bibitem{sun2018conversational}
Yueming Sun and Yi~Zhang.
\newblock Conversational recommender system.
\newblock In {\em The 41st international acm sigir conference on research \&
  development in information retrieval}, pages 235--244, 2018.

\bibitem{xu2021adapting}
Kerui Xu, Jingxuan Yang, Jun Xu, Sheng Gao, Jun Guo, and Ji-Rong Wen.
\newblock Adapting user preference to online feedback in multi-round
  conversational recommendation.
\newblock In {\em Proceedings of the 14th ACM international conference on web
  search and data mining}, pages 364--372, 2021.

\bibitem{li2010contextual}
Lihong Li, Wei Chu, John Langford, and Robert~E Schapire.
\newblock A contextual-bandit approach to personalized news article
  recommendation.
\newblock In {\em Proceedings of the 19th international conference on World
  wide web}, pages 661--670, 2010.

\bibitem{yelp_dataset}
Yelp dataset.
\newblock \url{https://www.yelp.com/dataset}.
\newblock [Online].

\bibitem{jedor2019categorized}
Matthieu Jedor, Vianney Perchet, and Jonathan Louedec.
\newblock Categorized bandits.
\newblock {\em Advances in Neural Information Processing Systems}, 32, 2019.

\end{thebibliography}
\bibliographystyle{unsrt}
% \clearpage
% \input{9_appendix}
\end{document}